\title[Composing Verification of Foreign Functions with Cogent]{
	Overcoming Restraint: Composing Verification of Foreign Functions with Cogent}
\author{Louis Cheung}
\affiliation{%
	\institution{University of Melborune}
	\city{Melbourne}
	\state{Victoria}
	\country{Australia}
}
\email{lfcheung@student.unimelb.edu.au}
\author{Liam O'Connor}
\affiliation{%
	\institution{University of Edinburgh}
	\city{Edinburgh}
	\country{Scotland}
}
\email{l.oconnor@ed.ac.uk}
\author{Christine Rizkallah}
\affiliation{%
	\institution{University of Melbourne}
	\city{Melbourne}
	\state{Victoria}
	\country{Australia}
}
\email{christine.rizkallah@unimelb.edu.au}
\keywords{compilers, verification, type-systems, language interoperability, data-structures}
\definecolor{commentcol}{rgb}{0.3,0.3,0.3}
\definecolor{keywordcol}{rgb}{0,0,0.4}
\definecolor{typecol}{rgb}{0.4,0.1,0}
\definecolor{funccol}{rgb}{0.1,0.4,0}
\definecolor{anticol}{rgb}{0.1,0.4,0.4}
\definecolor{light-gray}{gray}{0.90}
\tiny\color{commentcol},
\lstdefinelanguage{systemsC}{
  language=C,
  morekeywords={size_t, $ty, $id, $exp, $spec},
}
\lstdefinelanguage{Cogent}{
  basicstyle=\ttfamily\footnotesize,
  sensitive=true,
  morecomment=[l]{--},
  morecomment=[s]{\{-}{-\}},
  morekeywords={type, let, in, and, if, then, else},
}
\lstdefinestyle{isa}{
  basicstyle=\ttfamily\footnotesize,  
  columns=fullflexible,
  keepspaces,
  mathescape=true,
  morecomment=[s]{(*}{*)},
  morekeywords={locale,fixes,assumes,shows,and,lemma,definition,in,
    type_synonym,where,procedures,theorem,if,then,case,of,let},
  literate=
    {"}{}0
    {ÔøΩ}{$^\prime$}1
    {'}{$^\prime$}1
    {\\<^sub>6}{$_6$}1      
    {\\<^sub>4}{$_4$}1      
    {\\<^sub>3}{$_3$}1      
    {\\<^sub>2}{$_2$}1      
    {\\<^sub>1}{$_1$}1      
    {\\<^sub>0}{$_0$}1      
    {\\<^sub>f}{$_f$}1
    {\\<^sub>G}{$_G$}1      
    {\\<^sub>T}{$_T$}1
    {\\<forall>}{$\forall$}1
    {\\<exists>}{$\exists$}1
    {\\<equiv>}{$\equiv$}1
    {\\<Longrightarrow>}{$\Longrightarrow$}2
    {\\<longrightarrow>}{$\longrightarrow$}2
    {\\<Rightarrow>}{$\Rightarrow$}1
    {\\<rightarrow>}{$\rightarrow$}1
    {\\<leftarrow>}{$\leftarrow$}1
    {\\<longleftrightarrow>}{$\longleftrightarrow$}2
    {\\<Down>}{$\Downarrow$}2
    {\\<guillemotleft>}{$\mathopen{\{\mkern-4mu|}$}2
    {\\<guillemotright>}{$\mathclose{|\mkern-4mu\}}$}2
    {\\<And>}{$\bigwedge$}1
    {\\<and>}{$\land$}1
    {\\<or>}{$\lor$}1
    {\\<not>}{$\lnot$}1
    {\\<in>}{$\in$}1
    {\\<notin>}{$\notin$}1
    {\\<noteq>}{$\neq$}1
    {\\<le>}{$\le$}1
    {\\<ge>}{$\ge$}1
    {\\<cup>}{$\cup$}1
    {\\<cap>}{$\cap$}1
    {\\<lambda>}{$\lambda$}1
    {\\<times>}{$\times$}1
    {\\<turnstile>}{$\vdash$}1
    {\\<turnstile-t>}{$\vdash_t$}2
    {\\<subseteq>}{$\subseteq$}1
    {\\<sqinter>}{$\sqcap$}1
    {\\<lbrace>}{$\mathopen{\{\mkern-4mu|}$}1
    {\\<rbrace>}{$\mathclose{|\mkern-4mu\}}$}1
    {\\<lbrakk>}{$\mathopen{\lbrack\mkern-3mu\lbrack}$}1
    {\\<rbrakk>}{$\mathclose{\rbrack\mkern-3mu\rbrack}$}1
    {\\<infinity>}{$\infty$}1
    {\\<sigma>}{$\sigma$}1
    {\\<mu>}{$\mu$}1
    {\\<gamma>}{$\gamma$}1
    {\\<tau>}{$\tau$}1
    {\\<Gamma>}{$\Gamma$}1
    {\\<le>}{$\leq$}1
    {\\<Sum>}{$\sum$}1
    {\\<dollar>}{\$}1
    }
\lstdefinelanguage{ffiHaskell}{
  language=haskell,
  morekeywords={foreign, ccall, unsafe},
}
\newcommand{\removelatexerror}{\let\@latex@error\@gobble}
\newcommand{\ignore}[1]{}  
\newcommand{\sset}[1]{\{ #1 \}}
\newcommand{\ie}{i.e.,\ }
\newcommand{\dottvar}[1]{\ifx\relax#1\else
  \expandafter\ifx\string_#1\string_\allowbreak\else#1\fi
  \expandafter\dottvar\fi}
\newcommand{\ttvar}[1]{\texttt{\expandafter\dottvar\detokenize{#1}\relax}}
\newcommand{\typeparam}[1]{\textbf{\textcolor{blue}{#1}}}
\newcommand{\CDSL}{\Cogent}
\newcommand{\extfs}{\texttt{ext2}}
\newcommand{\bilby}{BilbyFs\xspace}
\newcommand{\walen}{\CFunName{length}\xspace}
\newcommand{\waget}{\CFunName{get}\xspace}
\newcommand{\waput}{\CFunName{put}\xspace}
\newcommand{\wafold}{\CFunName{fold}\xspace}
\newcommand{\wamap}{\CFunName{mapAccum}\xspace}
\newcommand{\Walen}[1]{\CFunName{length\textsubscript{#1}}\xspace}
\newcommand{\Waget}[1]{\CFunName{get\textsubscript{#1}}\xspace}
\newcommand{\Waput}[1]{\CFunName{put\textsubscript{#1}}\xspace}
\newcommand{\Wafold}[1]{\CFunName{fold\textsubscript{#1}}\xspace}
\newcommand{\Wamap}[1]{\CFunName{mapAccum\textsubscript{#1}}\xspace}
\newcommand{\RepeatAtm}[1]{\CFunName{repeat\textsubscript{#1}}\xspace}
\begin{document}

\begin{abstract}
\Cogent is a restricted functional language designed to
reduce the cost of developing verified systems code. Because 
of its sometimes-onerous restrictions, such as the lack of support for recursion 
and its strict uniqueness type system, \Cogent provides an escape hatch in the form of a
foreign function interface (FFI) to C code.
This poses a problem when verifying \cogent programs, as imported C components do not enjoy the same level of static guarantees
that \cogent does. Previous verification of file systems implemented in \cogent merely assumed that their C components were 
correct and that they preserved the invariants of \cogent's type system. 
In this paper, we instead prove such obligations. We demonstrate how they smoothly compose with existing \cogent theorems, and result in a correctness theorem of the overall 
\Cogent-C system. The \Cogent FFI constraints ensure that key invariants of \Cogent's type system are maintained even when calling C code. We verify reusable higher-order and polymorphic functions including a generic loop combinator and array iterators and demonstrate their application to several examples including binary search and the \bilby file system.
 We demonstrate the feasibility of verification of mixed \Cogent-C systems, and provide some insight into verification of 
software comprised of code in multiple languages with differing levels of static guarantees. 
\end{abstract}

\maketitle

\section{Introduction}
\label{sec:introduction}

\cogent~\cite{OConnor_CRALMNSK_16} is a restricted purely functional language with a \emph{certifying compiler}~\cite{Rizkallah_LNSCOMKK_16,OConnor_CRALMNSK_16} designed to ease creating verified operating systems components~\cite{Amani_HCRCOBNLSTKMKH_16}. It has a foreign function interface (FFI) that enables implementing parts of a system in C. \cogent's main restrictions are the purposeful lack of recursion or loops, which ensures totality,
and its \emph{uniqueness type system}, which enforces a \emph{uniqueness invariant} that, among other benefits, guarantees memory safety.

Even in the restricted target domains of \cogent, real programs contain some amount of iteration, primarily over data structures such as buffers.
This is achieved through \CDSL's principled FFI:
engineers provide data structures and their associated operations, including iterators,
in a special dialect of C, and import them into \CDSL, including in formal reasoning.
This special C code, called \emph{template C}, can refer to \cogent data types and functions, and is translated into standard C along with the \cogent program by the \cogent compiler.
As long as the C components respect \cogent's foreign function interface ---
\ie are correct, memory-safe and respect the uniqueness invariant ---
the \cogent framework guarantees that correctness properties proved on high-level specs also apply to the compiler output.


Two real-world Linux file systems have been implemented in \cogent{} --- \extfs\ and \bilby\ --- and key operations of \bilby have been verified~\cite{Amani_HCRCOBNLSTKMKH_16}. This prior work demonstrates \Cogent's suitability as a systems programming language and as a verification framework that reduces  the cost of verification. 
%
The implementations of these file systems import an external C library of data structures, which include fixed-length arrays and iterators for implementing loops, as well as \Cogent\ stubs for accessing a range of the Linux kernel's internal APIs. This library was carefully designed to ensure compatibility with \Cogent's FFI constraints,
but was previously left unverified --- only the \Cogent\ parts of these file system operations were proven correct, and statements of the underlying C correctness and FFI constraints defining \Cogent-C interoperability were left as assumptions.

To fully verify a system written in \Cogent\ and C, 
one needs to provide manually-written abstractions of the C parts,
and manually prove refinement through \cogent's FFI. 
The effort required for this manual verification remains substantial, but the
reusability of these libraries allows this cost to be amortised across different systems. 
%
%
%
%
%
%

In this paper, we eliminate several of these assumptions by verifying  
the array implementation and key iterators used in the \bilby file system (\Cref{sec:wordarray}), 
and discharging the conditions imposed by \Cogent's FFI (\Cref{sec:examples}). 
We also verify a generic-loop combinator (\Cref{sec:loops}) and its application to binary search.
This demonstrates that it is possible and relatively straightforward for the C 
components of a real-world \cogent-C system to satisfy \cogent's FFI conditions.
The compiler-generated refinement and preservation proofs
compose with manual C proofs at each intermediate level
up to \cogent's generated shallow embedding, and verify some example \Cogent-C programs. 

As arrays and loops are extremely common in \Cogent programming, our proofs are highly 
reusable for verification of any future \cogent-C system. In addition, our proofs connect C arrays 
to plain old Isabelle/HOL lists and loops to an Isabelle/HOL repeat function that allows early termination. 
These proofs are reusable even beyond the context of \cogent in verification of C code. 
Our code and proofs are online~\cite{cogent:wa-proofs}. %

Similar to many high-level languages, Cogent's foreign function interface connects a high-level language with strong static guarantees to an unsafe imperative language.
This work provides our community with a case-study demonstrating how to equip such a foreign function interface with proof requirements such that those static guarantees are maintained for the overall system.
In particular, our work supports, and is well-described by, Ahmed's claim that:

\emph{
``Compositional compiler correctness is, in essence, a language interoperability problem: for viable solutions in the long term, high-level languages must be equipped with principled foreign-function interfaces that specify safe interoperability between high-level and low-level components, and between more precisely and less precisely typed code.''}~\cite{ahmed:LIPIcs:2016:5968} 

Our approach to language interoperability does not rely on how the refinement theorems of the  languages are obtained nor on whether they are manually or automatically proven. As such, we believe that this approach is likely reusable in the context of verified compilers.

\section{\cogent}
The \Cogent language~\cite{jfp,liam:phd,OConnor_CRALMNSK_16} was originally designed for the implementation of systems components such as file systems~\cite{Amani_HCRCOBNLSTKMKH_16}. It is a purely functional language, but it is compiled into efficient C code suitable for systems programming\footnote{While Cogent is ideally suited for applications that involve minimal sharing and where efficiency matters, it is not specific to the systems domain.}.

The \Cogent\ compiler produces three artefacts:
C code,
a shallow embedding of the \Cogent code in Isabelle/HOL~\cite{Nipkow_PW:Isabelle}, and
a formal refinement proof relating the two~\cite{OConnor_CRALMNSK_16,Rizkallah_LNSCOMKK_16}.
The refinement theorem and proof rely on several intermediate embeddings also generated by the \Cogent compiler,
some related through language level proofs, and others through translation validation phases (Section~\ref{ssec:cogent-verification}).
The compiler certificate guarantees that correctness theorems proven on top of the shallow embedding also hold for the generated C,
which eases verification,
and serves as the basis for further functional correctness proofs.


 %
 %

A key part of the compiler certificate depends on \Cogent's \emph{uniqueness type system},
which enforces that each mutable heap object has exactly one active pointer in scope at any point in time.
This \emph{uniqueness invariant} allows modelling imperative computations as pure functions:
the allocations and repeated copying commonly found in functional programming
can be replaced with destructive updates, and the need for garbage collection is eliminated,
resulting in predictable and efficient code.

Well-typed \Cogent\ programs have two interpretations: a purely functional \emph{value semantics}, which has no notion of a heap and treats all objects as immutable values, and
an imperative \emph{update semantics}, describing the destructive mutation of heap objects. These two semantic interpretations correspond (Section~\ref{ssec:cogent-verification}), meaning that any correctness proofs about the value semantics also apply to the update semantics. As we shall see, this correspondence further guarantees that well-typed \cogent programs are memory safe.

%


\newcommand{\cLevel}{\textsf{c}}
\newcommand{\polyLevel}{\textsf{p}}
\newcommand{\monoLevel}{\textsf{m}}
\newcommand{\shallowLevel}{\textsf{s}}
\newcommand{\updateLevel}{\textsf{u}}
\definecolor{cRef1}{HTML}{ac92eb}
\definecolor{cRef2}{HTML}{4fc1e8}
\definecolor{updRef}{HTML}{a0d568}
\definecolor{monoRef}{HTML}{ffce54}
\definecolor{polyRef}{HTML}{ed5564}

%
%


\subsection{Language Design and Examples}

\cogent\ has unit, numeric, and boolean primitive types,
as well as functions, 
sum types (\ie variants) and product types (\ie tuples and records).
Users can declare additional \emph{abstract} types in \cogent,
and define them externally (\ie in C). Abstract and record types may be 
\emph{boxed}, that is, stored on the heap, in which case they are mutable
and subject to the uniqueness restrictions of \cogent's type system.
\Cogent\ does not support closures,
so partial application via currying is not common. Thus,
functions of multiple arguments
take a tuple or record of those arguments.

\begin{figure}
	\begin{tabular}{l}
		\CKwd{type} \CTypeName{Array} \CTypeVar{a}\\
\CFunName{length} : (\CTypeName{Array} \CTypeVar{a})\CBang\CFunctionArrow\CPrimType{U32}\\
\CFunName{get} : ((\CTypeName{Array} \CTypeVar{a})!, U32, a!) \CFunctionArrow \CTypeVar{a}!\\
\CFunName{put} : \CLParen\CTypeName{Array}  \CTypeVar{a}, \CPrimType{U32}, \CPrimType{a}\CRParen \CFunctionArrow \CTypeName{Array}  \CTypeVar{a}\\[0.5em]
\CFunName{map} : \CLParen\CTypeVar{a} \CFunctionArrow \CTypeVar{a}, \CTypeName{Array} \CTypeVar{a}\CRParen \CFunctionArrow \CTypeName{Array} \CTypeVar{a}\\
\CFunName{fold} : \CLParen \CLParen\CTypeVar{a}\CBang, \CTypeVar{b}\CRParen \CFunctionArrow \CTypeVar{b}, \CTypeVar{b}, (\CTypeName{Array} \CTypeVar{a})\CBang\CRParen \CFunctionArrow \CTypeVar{b}\\[0.5em]
\hline
$\ $\\[-0.6em]
\CFunName{add} : \CLParen\CPrimType{U32}, \CPrimType{U32}\CRParen \CFunctionArrow \CPrimType{U32}\\
\CFunName{add} \CLParen\CVar{x}, \CVar{y}\CRParen = \CVar{x} \CPrimOp{+} \CVar{y}\\
\CFunName{sum} : \CLParen\CTypeName{Array} \CPrimType{U32}\CRParen\CBang \CFunctionArrow \CPrimType{U32}\\
\CFunName{sum} \CVar{arr} = \CFunName{fold} \CLParen \CFunName{add}, \CLiteral{0}, \CVar{arr}\CRParen
	\end{tabular}
\caption{A \cogent \CFunName{sum} program that makes use of an abstract array type and operations.}
\label{ex:array}
\end{figure}
\Cref{ex:array}~includes an example of \Cogent signatures for an externally-defined array library interface, where array indices and length are unsigned 32-bit integers (\CPrimType{U32}).

Like ML, \Cogent supports \emph{parametric polymorphism} for top-level functions, 
and implements it via monomorphisation. For imported code,
the compiler generates specialised C implementations from a polymorphic template,
one for each concrete instantiation used in the \Cogent\ code.
Variables of polymorphic type are by default \emph{linear}, which means they must be used exactly once~\cite{Wadler_90}.
Thus a polymorphic type variable may be instantiated to any type, including types that 
contain pointers, while preserving the uniqueness invariant.

As mentioned, types and functions provided in external C code are called \emph{abstract} in \cogent{}. The \cogent{} compiler has infrastructure for linking the C implementations and the compiled \cogent{} code.
Users write \emph{template C} code, that can include embedded \cogent types and expressions via quasi-quotation,
and the \cogent compiler translates the template C into ordinary C that it links with the C code generated from \Cogent.
To represent containers, abstract types may be given type parameters. 
These parameterised types, as well as polymorphic functions, are translated into a family of automatically generated C functions and types; one for each concrete type used in the \Cogent program.

Though the \CTypeName{Array} type interface may appear purely functional,
\cogent\ assumes that all abstract types are by default \emph{linear}, ensuring that 
the uniqueness invariant applies to 
variables of type \CTypeName{Array}.
Therefore, any implementation of the abstract \CFunName{put} function
is free to destructively update the provided array,
without contradicting the purely functional semantics of \cogent.

When functions only need to read from a data structure,
uniqueness types can complicate a program unnecessarily
by requiring a programmer to thread through all state,
even unchanged state.
The \CBang{}-operator helps to avoid this
by converting \emph{linear}, \emph{writable} types
to \emph{read-only} types that can be freely shared or discarded.
This is analogous to a \emph{borrow} in Rust.
The \CFunName{length} and \CFunName{get} functions, presented in \Cref{ex:array},  
can \emph{read} from the given array,
but may not \emph{write} to it.


As \cogent does not support recursion,
iteration is expressed using abstract \emph{higher-order functions},
providing basic traversal combinators
such as \CFunName{map} and \CFunName{fold} for abstract types, as 
can be seen in \Cref{ex:array}.
Note that \CFunName{map} is passed a function of type $\text{\CTypeVar{a}} \rightarrow \text{\CTypeVar{a}}$.
As such, \CFunName{map} is able to destructively overwrite the array with the result of the function applied to each element.

While \Cogent supports higher-order functions, 
it does not support nested lambda abstractions or closures,
as these can require allocation if they capture variables.
Thus, to invoke the \CFunName{map} or \CFunName{fold} functions,
a separate top-level function must be defined, such as \CFunName{add} in our example.

\begin{figure}
	\begin{tabular}{l}
\CFunName{length} : (\CTypeName{Array} \CTypeVar{a})! \CFunctionArrow \CPrimType{U32}\\
\CFunName{mapAccum} : 
\CLParen 
(\CTypeVar{a}, \CTypeVar{b}, \CTypeVar{c}\CBang) \CFunctionArrow (\CTypeVar{a}, \CTypeVar{b}),
\CTypeVar{b}, 
(\CTypeName{Array} \CTypeVar{a})\CBang, 
\CPrimType{U32}, 
\CPrimType{U32}, 
\CTypeVar{c}\CBang\CRParen\\
\quad\quad\quad\quad\CFunctionArrow (\CTypeName{Array} \CTypeVar{a}, \CTypeVar{b})\;
\\

\CFunName{fold} : 
\CLParen
(\CTypeVar{a}, \CTypeVar{b}, \CTypeVar{c}\CBang) \CFunctionArrow \CTypeVar{b},  
\CTypeVar{b}, 
(\CTypeName{Array} \CTypeVar{a})\CBang,
\CPrimType{U32}, 
\CPrimType{U32}, 
\CTypeVar{c}\CBang\CRParen\\[0.5em]
\hline
$\ $\\[-0.6em]
\CFunName{add} : \CLParen\CPrimType{U32}, \CPrimType{U32}, \CLParen\CRParen\CRParen \CFunctionArrow \CPrimType{U32}\\
\CFunName{add} \CLParen\CVar{x}, \CVar{y}, \CVar{z}\CRParen = \CVar{x} \CPrimOp{+} \CVar{y}\\
\CFunName{sum} : (\CTypeName{Array} \CPrimType{U32})! \CFunctionArrow \CPrimType{U32}\\
\CFunName{sum} \CVar{arr} =
\CFunName{fold}
\CLParen
\CFunName{add},  
0, 
\CVar{arr},
0, 
\CFunName{length} \CVar{arr}, 
\CLParen\CRParen\CRParen
	\end{tabular}
	\caption{The \CFunName{sum} function now written against the interace from \Cogent's C library.}
	\label{ex:iterators}
\end{figure}
The array interface from \Cogent's C library used in the implementation of the \Cogent file systems, part of which is given in~\Cref{ex:iterators}, is 
more complex than that of~\Cref{ex:array}: The higher-order functions are given two additional index parameters 
to operate over only a subsection of an array; and
instead of relying on closure captures, which are not available in \Cogent,
we provide alternative iterator functions which carry an
additional \emph{observer} read-only input (of type $\mathit{c!}$).
In addition, the function \CFunName{mapAccum} is a generalised version of \CFunName{map}, 
which allows threading an accumulating argument through the \CFunName{map} function, similar to
the same function in Haskell.
We present the verification of these iterators in~\Cref{sec:wordarray}.

\subsection{Dynamic Semantics}
\label{ssec:cogent-dynsem}
\cogent's big-step \emph{value} semantics is defined through the judgement
\mbox{$\ValSemA{\xi_v}{V}{e}{v}$}. 
This judgement states that the expression $e$  under environment $V$ evaluates to the value
$v$. The environment $V$ maps variables to their values.
The imperative \emph{update} semantics, which additionally may manipulate a mutable store $\mu$, is defined through the judgement  
\mbox{$\UpdSemAb{\xi_u}{U}{e}{\mu}{u}{\mu'}$}. This states that, starting with an initial 
store $\mu$ the expression $e$ will evaluate under the environment $U$ to a final store $\mu'$ and a result value $u$. Unlike 
the values in the value semantics, values in the update semantics may be \emph{pointers} to locations in the store.

Both of these semantics are further parameterised by additional \emph{functions} and types of \emph{values} that are 
provided externally to \Cogent, to model the semantics of abstract functions and types. 
More formally, the value semantics is parameterised by a function $\xi_\mathsf{v}: f_{\mathit{id}} \rightarrow (v \rightarrow v)$
and the update semantics by a function $\xi_\mathsf{u}:
f_{\mathit{id}} \rightarrow (\mu \times u \rightarrow \mu \times u)$. Both of these are essentially an environment providing a 
 pure HOL function on \Cogent values (and stores, for the update semantics) for each abstract function. The definitions of 
 values in the value semantics ($v$) and update semantics ($u$) are also extended with parameters $a_\mathsf{v}$ and $a_\mathsf{u}$ respectively 
which represent values of abstract types.

Along with C code for all abstract functions and types, the user must also manually provide 
Isabelle/HOL abstractions of this C code to instantiate these environments. 

To verify \Cogent systems, three main proof obligations must be discharged: \emph{type preservation}, which ensures the uniqueness invariant is maintained;
 the \emph{frame requirements}, which ensures that memory safety is maintained; and \emph{refinement}, which ensures that functional correctness theorems are preserved down to the 
 C level via the provided abstractions. \Cogent proves all three of these requirements automatically for \Cogent code: both type preservation and the frame requirements are simple corollaries of 
 the key semantic correspondence theorem (\Cref{thm:updvalrefinement}) that makes up part of the \Cogent refinement chain. For linked C code, however, the user must discharge these obligations manually.
 We discuss our verification of these requirements for C code in \Cref{sec:wordarray}.

\subsection{Typing and Type Preservation}
\label{ssec:type-pres}
\Cogent's \emph{static semantics} are defined through a standard typing judgement $\Typing{A}{\Gamma}{e}{\tau}$, which states that 
$e$ has type $\tau$ under context $\Gamma$,
with an additional context $A$ that tracks assumptions about the linearity of type variables in $\tau$. To accommodate 
abstract types, we allow the type system to be extended with types $\AbsTy{A}{\overline{\tau}}$ and $(\AbsTy{A}{\overline{\tau}})!$, referring 
to linear abstract types and read-only abstract types respectively, where \CPrimType{A} is a type constructor parameterised by zero or more type 
parameters $\overline{\tau}$.

Dynamic values in the value semantics are typed by the simple judgement $\VTRV{v}{\tau}$, whereas update semantics values must be typed with the store $\mu$ to type the parts of the value that are stored there.
Update semantics values are typed by the judgement $\VTRU{u}{\mu}{\tau}{r}{w}$,
which additionally includes the \emph{heap footprint},
consisting of the sets of read-only ($r$) and writable ($w$) pointers the value can contain. 
We use the same notation for value typing on environments.

This heap footprint annotation is crucial to ensuring that \Cogent maintains its uniqueness invariant,
as it places constraints on the footprints of subcomponents of a value to rule out aliasing of live pointers.
Thus, our theorem of type preservation
across evaluation in the update semantics also shows that this invariant is preserved.
More 
details on these constraints are discussed in earlier work~\citep{jfp,OConnor_CRALMNSK_16}. 
When the heap footprints are not relevant and merely existentially quantified, we shall omit them:\ $$\VTRUE{u}{\mu}{\tau}\ \equiv\ \exists r\ w.\ \VTRU{u}{\mu}{\tau}{r}{w}$$

\begin{theorem}[Type Preservation]\label{thm:presu}
	For both update and value semantics:\\[1em]
	$\begin{array}{lccccccl}
		\bullet & \Typing{A}{\Gamma}{e}{\tau}\!\!&\!\! \land\!\! &\!\!\VTRUE{U}{\mu}{\Gamma}\!\!
		&\!\! \land\!\!&\!\! \UpdSemAb{\xi_u}{U}{e}{\mu}{u}{\mu'}\!\! &\!\! \longrightarrow
		\!\!&\!\!\VTRUE{u}{\mu'}{\tau}\\[0.5em]
		\bullet & \Typing{A}{\Gamma}{e}{\tau}\!\!&\!\! \land\!\! &\!\!\VTRV{V}{\Gamma}\!\!
		&\!\! \land\!\!&\!\! \ValSemA{\xi_v}{V}{e}{v} &\!\! \longrightarrow
		\!\!&\!\!\VTRV{v}{\tau}\\		
	\end{array}$
\end{theorem}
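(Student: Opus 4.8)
The plan is to prove each conjunct by rule induction on the evaluation derivation, treating the two semantics separately, since the update case carries the extra burden of the mutable store and the heap footprints while the value case does not. In both settings the engine of the proof is a collection of inversion (generation) lemmas for the typing judgement: from $\Typing{A}{\Gamma}{e}{\tau}$ together with the top-level shape of $e$ (dictated by the evaluation rule under consideration) I would recover the typings of the immediate subexpressions, along with the way Cogent's linear context $\Gamma$ is split among them via the context-splitting relation.

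For the value semantics the argument is essentially the standard functional one. Proceeding by induction on $\ValSemA{\xi_v}{V}{e}{v}$, each case inverts the typing derivation, appeals to the induction hypotheses on the evaluated subexpressions (using the sub-environments produced by context splitting and the fact that $\VTRV{V}{\Gamma}$ distributes over splits), and reassembles a derivation of $\VTRV{v}{\tau}$ by canonical-forms reasoning for the relevant type constructor: tuples and records componentwise, variants against the taken branch, primitives from the operator's signature. The only non-structural case is the abstract-function call, which I would discharge from a standing well-typedness assumption on $\xi_v$, namely that each externally provided function maps values of its argument type to values of its result type. Such an assumption is unavoidable, since nothing internal to Cogent constrains the behaviour of imported C.

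The update semantics is where the real work lies, and the footprint annotations $\VTRU{u}{\mu}{\tau}{r}{w}$ are the reason. To make the induction go through I would strengthen the statement to expose the explicit footprints rather than the existential abbreviation, and prove simultaneously a frame property: evaluation of $e$ mutates only locations inside the writable footprint of its inputs, leaving every other part of the store, and hence the typing of any value disjoint from that footprint, intact. This frame lemma is exactly what lets me sequence subexpressions. When a rule evaluates $e_1$ then $e_2$ under a split context $\Gamma = \Gamma_1 \oplus \Gamma_2$, the induction hypothesis for $e_1$ yields $u_1$ well-typed in the updated store $\mu_1$; I then need the portion of the environment feeding $e_2$, originally typed against $\mu$, to remain well-typed against $\mu_1$, which follows from the frame property together with the disjointness of the writable footprints guaranteed by splitting. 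Threading these footprints through each rule and checking that the resulting read/write sets still satisfy Cogent's no-aliasing constraints is the bulk of the proof.

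The main obstacle I anticipate is precisely this footprint bookkeeping: keeping the frame property and the uniqueness (disjointness) invariants in sync across context splits, mutation by \CFunName{put}-like abstract operations, and the pointer-creating and pointer-consuming rules. For abstract functions I would again appeal to an assumption bundled with $\xi_u$, but here it must say more than well-typedness, it must also assert that each abstract operation respects the frame discipline, mutating only within its writable footprint and preserving disjointness. Establishing that these assumptions suffice, and that they constitute the right interface for the later manual C proofs, is the delicate part; once the frame and splitting infrastructure is in place, the remaining cases are routine.
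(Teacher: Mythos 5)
Your proposal is correct in substance but takes a genuinely different route from the paper. You prove each bullet directly by rule induction on the evaluation derivation, strengthening the update-semantics statement to carry explicit footprints and a simultaneous frame property, and discharging abstract-function cases from assumptions bundled with $\xi_v$ and $\xi_u$. The paper does not run this induction for \Cref{thm:presu} at all: it combines the two value-typing relations into a single value--update refinement relation, proves that \emph{that} relation (together with the frame conditions) is preserved across evaluation once and for all in \Cref{thm:updvalrefinement}, and then obtains both type preservation and the frame requirements (\Cref{thm:framepres}) as corollaries by erasing one side of the combined relation. The underlying machinery is the same --- the induction you describe, with context splitting, footprint disjointness, and the frame lemma threaded through, is essentially the proof of \Cref{thm:updvalrefinement} --- so your account of where the real work lies is accurate. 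What the paper's factoring buys is economy: the combined theorem must be proven anyway for the refinement chain, so preservation comes for free rather than requiring a second, parallel induction, and the interface imposed on abstract functions is stated once (as the assumption that \Cref{thm:updvalrefinement} holds for them) rather than as separate well-typedness and frame assumptions on $\xi_v$ and $\xi_u$. What your direct approach buys is independence: the value-semantics bullet as stated quantifies over value-semantics evaluation alone, and extracting it from \Cref{thm:updvalrefinement}, whose conclusion is conditioned on an update-semantics evaluation, involves a slight mismatch that your standalone induction avoids.
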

\noindent This states that the value typing relation for either semantics is preserved across the evaluation relation for 
well-typed expressions. Because the value typing relations of the update and value semantics are later combined into one 
refinement relation, which is shown to be preserved across evaluation in~\Cref{thm:updvalrefinement}, type preservation
 is obtained by simply erasing one of the semantics from~\Cref{thm:updvalrefinement}.

Because the set of types is extensible, the value-typing relation for both semantics must also be 
extensible. To ensure that the user's extensions to the value-typing relation do not violate the uniqueness 
invariant, \Cogent places a number of proof obligations on abstract types that must be discharged by the user. 
These requirements are outlined in \Cref{ssec:type-reqs}.

\subsection{Frame Requirements}
\label{ssec:framerecs}
In addition to type preservation, which ensures that each \Cogent{} value is well-formed and 
does not contain internal aliasing, we must also show that the mutable store $\mu$ is 
in good order throughout evaluation --- memory should not be leaked, and programs should not write 
to memory to which they have no access. These \emph{memory safety} requirements are summed 
up by \Cogent's \emph{frame} relation, which describes how a program may affect the store.
\newcommand{\Frame}[4]{#1\ |\ #2\ \textbf{frame}\ #3\ |\ #4}
Given an input set of writable pointers $w_i$,
an input store $\mu_i$,
an output set of pointers $w_o$ and
an output store $\mu_o$,
the relation, $\Frame{w_i}{\mu_i}{w_o}{\mu_o}$,
ensures three properties for any pointer $p$:
\begin{align*}
  \textrm{inertia:\quad}
  & p \notin w_i \cup w_o  \longrightarrow
    \mu_i(p) = \mu_o(p) \\
  \textrm{leak freedom:\quad}
  & p \in w_i  \longrightarrow
    p \notin w_o  \longrightarrow
    \mu_o(p) = \bot \\
  \textrm{fresh allocation:\quad}
  & p \notin w_i \longrightarrow
    p \in w_o \longrightarrow
    \mu_i(p) = \bot
\end{align*}
\emph{Inertia} ensures that
pointers not in the frame remain unchanged;
\emph{leak freedom} ensures that
pointers removed from the frame no longer point to anything; and
\emph{fresh allocation} ensures that
pointers added to the frame were not already used.
The frame relation implies that
any property of a given value
is unaffected by updates to unrelated parts of the heap.
The frame relation holds for all \Cogent computations,
ensuring memory safety along with type safety:
\begin{theorem}[Preservation and Frame Relation]\label{thm:framepres}$\ $\\[0.5em]
	$\begin{array}{ll}
		\Typing{A}{\Gamma}{e}{\tau}\ \land\ \VTRU{U}{\mu}{\Gamma}{r}{w}\ 
		\land\ \UpdSemAb{\xi_u}{U}{e}{\mu}{u}{\mu'}\ \longrightarrow\\\
		\quad\exists r'\ w'.\ r' \subseteq r\ \land\
		\framerelargs{w}{\mu}{w'}{\mu'}\ \land\ \VTRU{u}{\mu}{\tau}{r'}{w'}
	\end{array}$
\end{theorem}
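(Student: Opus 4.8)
The plan is to prove both conjuncts simultaneously by rule induction on the update-semantics derivation $\UpdSemAb{\xi_u}{U}{e}{\mu}{u}{\mu'}$, generalising over $A$, $\Gamma$, $\tau$, $r$ and $w$ so that the induction hypotheses can be instantiated at the smaller contexts and footprints produced by context splitting. Type preservation and the frame relation are mutually entangled --- re-establishing the typing of the result under $\mu'$ requires the frame relation to argue that the parts of the store a subexpression did not touch are unchanged --- so it is cleaner to carry the typing of $u$, the frame relation, and $r' \subseteq r$ through one induction than to invoke \Cref{thm:presu} separately. In each case I would invert the typing derivation $\Typing{A}{\Gamma}{e}{\tau}$ to expose the typing of the immediate subexpressions, apply the induction hypotheses, and reassemble the footprints and the frame.

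The engine of the argument is an auxiliary frame-composition lemma: if evaluating $e_1$ carries $\mu$ to $\mu_1$ and $e_2$ carries $\mu_1$ to $\mu'$, each respecting the frame relation, then the composite respects a frame from $\mu$ to $\mu'$ with combined writable footprint bounded by the union of the two. Establishing it reduces to checking inertia, leak freedom and fresh allocation pointwise, classifying each pointer by which stage (if either) allocates or frees it; the delicate cases are pointers that one stage allocates and the other frees, which must register neither as a leak nor as a spurious fresh allocation of the composite. With this lemma available, the sequential evaluation cases reduce largely to footprint bookkeeping.

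Those sequential cases are where the real work lies, because \Cogent's type system splits the context $\Gamma$ into $\Gamma_1$ and $\Gamma_2$ for the two subexpressions, and the environment typing $\VTRU{U}{\mu}{\Gamma}{r}{w}$ must be split to match, yielding footprints $(r_1, w_1)$ and $(r_2, w_2)$ whose writable parts are \emph{disjoint} (while their read-only parts may overlap, since read-only (\CBang) types are shareable). After evaluating $e_1$ I would use the frame relation from the first induction hypothesis, together with this disjointness, to re-derive $\VTRU{U}{\mu_1}{\Gamma_2}{r_2}{w_2}$ under the \emph{updated} store $\mu_1$: inertia guarantees the $\Gamma_2$-portion of the heap is untouched because its footprint is disjoint from what $e_1$ wrote, and the shrinking read-only footprint keeps $r_2$ within bounds. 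This licenses the second induction hypothesis, after which frame composition closes the case. I expect this step --- maintaining footprint disjointness and re-typing the residual environment across an intervening evaluation --- to be the main obstacle, since it is exactly where the uniqueness invariant is doing its work.

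The remaining cases are comparatively routine. Variable lookups and literals leave the store fixed, so the identity frame suffices; constructors and projections propagate footprints without touching the heap; and the heap-manipulating primitives (boxed record \textbf{put}/\textbf{take} and allocation) are the cases that exercise fresh allocation and leak freedom, but each affects a single statically-known pointer and is discharged directly from the frame definition. The one case that escapes the language-internal reasoning is application of an \emph{abstract} function, which is evaluated by the external environment $\xi_u$; here the conclusion holds exactly when $\xi_u$ is assumed to satisfy preservation and the frame relation for each abstract function. Those assumptions are precisely the FFI proof obligations that the rest of this paper discharges for the array library and loop combinators.
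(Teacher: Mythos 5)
Your proof is correct in outline, but it is not the route the paper takes. The paper does not prove this theorem by a fresh induction at all: it observes that the combined value--update correspondence (\Cref{thm:updvalrefinement}), established in prior work, already carries the frame relation and a combined typing relation $\VTRN{u}{\mu'}{v}{\tau}{r'}{w'}$ through evaluation, so \Cref{thm:framepres} follows immediately by erasing the value-semantics component from that statement --- the same move that yields \Cref{thm:presu}. The inductive machinery you describe (generalising over contexts and footprints, a frame-composition lemma handling the allocate-then-free pointer cases, splitting $\Gamma$ and using inertia plus footprint disjointness to re-type the residual environment $\Gamma_2$ under the intermediate store, and parameterising the abstract-function case by assumptions on $\xi_u$) is essentially the machinery that prior work uses to prove \Cref{thm:updvalrefinement} itself, so your plan would amount to re-proving a weaker projection of an existing theorem. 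What your approach buys is self-containment --- a proof of memory safety and preservation that never mentions the value semantics; what the paper's approach buys is that the hard induction is done exactly once, and type preservation, the frame requirements, and refinement are guaranteed to be mutually consistent because they are all faces of one theorem. One small further point: you could have flagged that the conclusion as printed types $u$ against the initial store $\mu$ rather than $\mu'$, which is at odds with \Cref{thm:presu} and with the informal gloss following the theorem; your induction, like the paper's corollary, would naturally deliver typing under $\mu'$.
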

\noindent This states that a well-typed program will evaluate in the update 
semantics to a well-typed value, \emph{and} that the frame relation holds
between the writable pointers of the input environment and the output value.
Note that this theorem implies  update semantics type preservation
(\Cref{thm:presu}), because this theorem too is a simplification of \Cref{thm:updvalrefinement}.

\subsection{Refinement}
\label{ssec:cogent-verification}
\begin{figure}
\begin{tikzpicture}[y=1.1cm]
\draw[color=white,fill=gray!10!white] (-2.2,0.8) rectangle (6.2,3.3);	

\node[anchor=base] (c_heap) at (-0.38,0) {$\sigma$};
\node[anchor=base] at (-0.66,0) {\large$\vert$};
\node[anchor=base] (c_arg) at (-1.04,0) {$a_\cLevel{}$};
\node[anchor=base] (c_func) at (0.45,0) {$f_\cLevel{}$};
\node[anchor=base] (c_ret) at (1.5,0) {$r_\cLevel{}$};
\node[anchor=base] at (0,0) {\Large$\vdash$};
\node[anchor=base] at (1.8,0) {\large$\vert$};
\node[anchor=base] (c_heap2) at (2.2,0) {$\sigma'$};
\node[anchor=base,yshift=0.09cm,single arrow,single arrow head extend=0.05cm, draw,text height=0.04cm, shape border rotate=270, inner sep=0.1cm, label = {[label distance=-0.29cm]0:\scriptsize \sffamily c}] at (1,0) {};
\node[anchor=base] (upd_heap) at (-0.38,1) {$\mu$};
\node[anchor=base] at (-0.66,1) {\large$\vert$};
\node[anchor=base] (upd_env) at (-1.46,1) {$(x \mapsto a_\updateLevel{})$};
\node[anchor=base] (upd_func) at (0.45,1) {$f_\monoLevel{}$};
\node[anchor=base] (upd_ret) at (1.5,1) {$r_\updateLevel{}$};
\node[anchor=base] at (0,1) {\Large$\vdash$};
\node[anchor=base] at (1.8,1) {\large$\vert$};
\node[anchor=base] (upd_heap2) at (2.2,1) {$\mu'$};
\node[anchor=base,yshift=0.09cm,single arrow,single arrow head extend=0.05cm, draw,text height=0.04cm, shape border rotate=270, inner sep=0.1cm, label = {[label distance=-0.3cm]0:\scriptsize \sffamily u}] at (1,1) {};
\node[anchor=base] (mono_env) at (-0.95,2) {$(x \mapsto a_\monoLevel{})$};
\node[anchor=base] (mono_func) at (0.45,2) {$f_\monoLevel{}$};
\node[anchor=base] (mono_ret) at (1.5,2) {$r_\monoLevel{}$};
\node[anchor=base] at (0,2) {\Large$\vdash$};
\node[anchor=base,yshift=0.09cm,single arrow,single arrow head extend=0.05cm, draw,text height=0.04cm, shape border rotate=270, inner sep=0.1cm, label = {[label distance=-0.295cm]0:\scriptsize \sffamily v}] at (1,2) {};
\node[anchor=base] (poly_env) at (-0.929,3) {$(x \mapsto a_\polyLevel{})$};
\node[anchor=base] (poly_func) at (0.45,3) {$f_\polyLevel{}$};
\node[anchor=base] (poly_ret) at (1.5,3) {$r_\polyLevel{}$};
\node[anchor=base] at (0,3) {\Large$\vdash$};
\node[anchor=base,yshift=0.09cm,single arrow,single arrow head extend=0.05cm, draw,text height=0.04cm, shape border rotate=270, inner sep=0.1cm, label = {[label distance=-0.295cm]0:\scriptsize \sffamily v}] at (1,3) {};
\node[anchor=base] (shallow_arg) at (-0.6,4) {$a_\shallowLevel{}$};
\node[anchor=base] (shallow_func) at (0.45,4) {$f_\shallowLevel{}$};
\node[anchor=base] (shallow_ret) at (1.5,4) {$r_\shallowLevel{}$};
\node[anchor=base] at (0,4) {\Large$\vdash$};
\node[anchor=base,yshift=0.09cm,single arrow,single arrow head extend=0.05cm, draw,text height=0.04cm, shape border rotate=270, inner sep=0.1cm, label = {[label distance=-0.285cm]0:\scriptsize \sffamily s}] at (1,4) {};

\draw[color=cRef1!90!black,very thick] (c_heap.north) -- ([yshift=0.05cm]upd_heap.south);
\draw[color=cRef1!90!black,very thick] ([yshift=-0.1cm]c_heap2.north) -- ([yshift=0.05cm]upd_heap2.south);
\draw[color=cRef2!90!black,very thick] (c_ret.north) -- (upd_ret.south);
\draw[color=cRef2!90!black,very thick] (c_arg.north) -- ([yshift=0.65cm]c_arg.north);
\draw[color=updRef!90!black, very thick] (upd_ret.north) -- (mono_ret.south);
\draw[color=updRef!90!black, very thick] ([xshift=0.4cm,yshift=-0.1cm]upd_env.north) to[out=90,in=270] ([xshift=0.3cm,yshift=0.1cm]mono_env.south);
\draw[color=updRef!90!black, very thick] (upd_heap.north) to[out=90,in=270] ([xshift=0.3cm,yshift=0.1cm]mono_env.south);
\draw[color=updRef!90!black, very thick] ([yshift=-0.1cm]upd_heap2.north) to[out=90,in=270] (mono_ret.south);
\draw[color=monoRef!90!black, very thick] ([yshift=-0.06cm]mono_ret.north) -- (poly_ret.south);
\draw[color=monoRef!90!black, very thick] ([xshift=0.33cm,yshift=-0.12cm]mono_env.north) -- ([yshift=0.5cm,xshift=0.33cm]mono_env.north);
\draw[color=polyRef!90!black, very thick] ([xshift=0.3cm,yshift=-0.12cm]poly_env.north) to[out=90,in=270] ([xshift=0.3cm,yshift=0.55cm]poly_env.north);
\draw[color=polyRef!90!black, very thick] ([yshift=-0.05cm]poly_ret.north) -- (shallow_ret.south);

\node[anchor=east,yshift=0.05cm] at (6.2,0) {\small \sffamily AutoCorres C};
\node[anchor=east,yshift=0.05cm] at (6.2,1) {\small \sffamily Update Semantics};
\node[anchor=east,yshift=0.05cm] at (6.2,2) {\small \sffamily Value Semantics, Monomorphic};
\node[anchor=east,yshift=0.05cm] at (6.2,3) {\small \sffamily Value Semantics, Polymorphic};
\node[anchor=east,yshift=0.05cm] at (6.2,4) {\small \sffamily Shallow HOL Embedding};
\node[anchor=base] at (0.6, 3.5) {\footnotesize (\Cref{thm:scorres})};
\node[anchor=base] at (0.6, 2.5) {\footnotesize (\Cref{thm:monomorphise})};
\node[anchor=base] at (0.6, 1.5) {\footnotesize (\Cref{thm:updvalrefinement})};
\node[anchor=base] at (0.6, 0.5) {\footnotesize (\Cref{thm:corres})};
\end{tikzpicture}
\caption{\Cogent's semantic levels and refinement theorems.}
\label{fig:refinement}
\end{figure}

The overall proof that the C code refines the purely functional shallow embedding in Isabelle/HOL
is broken into a number of sub-proofs and translation validation phases.
\autoref{fig:refinement} gives an overview of \Cogent{}'s refinement theorems for a function $f(x)$ applied 
to an argument $a$.
The compiler generates four embeddings:
a top-level shallow embedding in terms of pure functions;
a polymorphic deep embedding of the \cogent program,
  which is interpreted using the \emph{value} semantics;
a monomorphic deep embedding of the \cogent program,
  which can be interpreted using either the \emph{value} or \emph{update} semantics;
and an Isabelle/HOL representation of the C code generated by the compiler,
  imported into Isabelle/HOL by the C-parser~\cite{Winwood_KSACN_09} 
  used in the seL4 project~\cite{Klein_EHACDEEKNSTW_09}.
The C-parser generates a deep embedding of C in Isabelle/HOL,
and, using AutoCorres~\cite{Greenaway_AK_12,Greenaway_LAK_14},
is then abstracted to a corresponding state-monadic embedding of the C code in HOL.

Each of these semantic layers is connected by a refinement proof by \emph{forward
simulation}: Given a \emph{refinement relation} that relates corresponding values 
between two layers, we prove that if the more \emph{concrete} (lower in the hierarchy)
layer evaluates, then the more \emph{abstract} (higher in the hierarchy) layer, given corresponding inputs, will
also evaluate to corresponding outputs. The composition of all these refinements
means that any property preserved by refinement,
such as functional correctness, proved about all executions of the most \emph{abstract} embedding --- the Shallow HOL embedding --- will also apply to the most \emph{concrete} embedding, i.e.\ the C code.

\subsubsection{Update to C Refinement}

In the first stage, \Cogent proves refinement between the C implementation and the deep
embedding in the update semantics. AutoCorres imports C as a nondeterministic 
state-monadic program 
shallowly embedded in Isabelle/HOL. To make our definitions more symmetrical with those of
\Cogent, we define a C evaluation relation as follows:
\begin{definition}[C Evaluation Relation]
	\[\begin{array}{ll}
		\CSem{a_\cLevel}{\sigma}{f_\cLevel}{r_\cLevel}{\sigma'}\ \equiv\\  
		\quad\ (r_\cLevel, \sigma') \in \results (f_\cLevel\ a_\cLevel\ \sigma) \land \lnot \failed (f_\cLevel\ a_\cLevel\ \sigma)
	\end{array}\]
\end{definition}
\noindent This states that given an input $a_c$ and C heap $\sigma$, the C function $f_c$ evaluates to $r_c$ and an output 
heap $\sigma'$ and that no undefined behaviour occurred (indicated by $\neg\mathit{failed}$).

\newcommand{\refrel}[4]{\textcolor{#1!80!black}{\mathcal{#2}}_{#3}^{#4}}
\newcommand{\valrelupdC}{\ensuremath{\refrel{cRef2}{V}{\cLevel}{\updateLevel}}\xspace}
\newcommand{\heaprelupdC}{\ensuremath{\refrel{cRef1}{H}{\cLevel}{\updateLevel}}\xspace}
\newcommand{\monorel}{\ensuremath{\refrel{monoRef}{R}{\monoLevel}{\polyLevel}}\xspace}
\newcommand{\shallowrel}{\ensuremath{\refrel{polyRef}{R}{\polyLevel}{\shallowLevel}}\xspace}

The \Cogent compiler additionally generates a \emph{value relation} \valrelupdC
and a \emph{heap relation} \heaprelupdC, which together form the refinement relation
for this refinement lemma. Using an automated technique described elsewhere~\citep{Rizkallah_LNSCOMKK_16}, 
the \Cogent proof then automatically discharges this proof obligation on a per-program basis 
via translation validation, while leaving open proof obligations for the user to discharge for abstract functions implemented in C.

\begin{theorem}[Update $\Rightarrow$ C refinement]\label{thm:corres}
For any function $f(x)$ with monomorphic \Cogent embedding $f_\monoLevel$ and C embedding $f_\cLevel$,
given an argument represented in the update semantics of \Cogent as $a_\updateLevel$ and in C as $a_\cLevel$, we have:
	\[\begin{array}{ll}
	\valrelupdC(a_\cLevel,a_\updateLevel) \land \heaprelupdC(\sigma,\mu)
		\land \CSem{a_\cLevel}{\sigma}{f_\cLevel}{r_\cLevel}{\sigma'} \longrightarrow\\
		\quad \exists r_u\ \mu'.\ \UpdSem{(x \mapsto a_\updateLevel)}{f_\monoLevel}{\mu}{r_\updateLevel}{\mu'} \\\qquad\qquad \land\ \valrelupdC(r_\cLevel,r_\updateLevel) \land \heaprelupdC(\sigma',\mu') 
	\end{array}\]
\end{theorem}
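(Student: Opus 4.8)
The plan is to treat the conjunction $\valrelupdC \land \heaprelupdC$ as a forward-simulation invariant and to establish the theorem by induction on the shared syntactic structure of the monomorphic embedding $f_\monoLevel$ (equivalently, on the \Cogent\ typing derivation of the source expression). Both $f_\monoLevel$ and the AutoCorres embedding $f_\cLevel$ are generated from the same monomorphic source, so at each \Cogent\ construct the C code emitted by the compiler has a predictable state-monadic shape that mirrors exactly one update-semantics evaluation rule. Because $\valrelupdC$ and $\heaprelupdC$ are fixed \emph{per program}, this is in practice a translation-validation argument: for each concrete program the compiler generates a proof script that descends through the term following the structure below, rather than a single monolithic induction over all \Cogent\ terms.

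First I would dispatch the base cases. For a variable, the required update result is read directly from the environment $(x \mapsto a_\updateLevel)$, and $\valrelupdC(r_\cLevel, r_\updateLevel)$ holds by the hypothesis on the argument; literals, unit, and primitive operations follow from the agreement between C arithmetic and the update semantics, with $\sigma' = \sigma$ and $\mu' = \mu$ so that $\heaprelupdC$ is trivially preserved. For the compound cases I would decompose the monadic bind in $f_\cLevel$: a \emph{let} or sequencing splits the C evaluation into an evaluation of the bound subexpression followed by the continuation. Using $\lnot\failed$ to rule out undefined behaviour and $\results$ to extract an intermediate C state, I apply the induction hypothesis to the subexpression to obtain a matching update step, extend the environment with the freshly bound (and now related) value, and then feed the preserved value and heap relations into the induction hypothesis for the continuation. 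Variant \emph{case} expressions are handled by matching the C tag discriminant against the corresponding update-semantics branch.

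The delicate technical step is preserving $\heaprelupdC$ across heap-manipulating constructs. Reading a field of a boxed record corresponds to a pointer dereference, and $\heaprelupdC$ guarantees that the dereferenced C value is related to the update-semantics component; writing corresponds to a destructive heap update, and I must show that the relation still holds of the \emph{whole} heap afterwards. This is where the uniqueness invariant pays off: once the induction hypothesis has produced the corresponding update step, type preservation (\Cref{thm:presu}) together with the frame relation (\Cref{thm:framepres}) tells me that the updated pointer is disjoint from the footprint of every other live value, so the update leaves the related portions of both heaps untouched and only the modified object needs to be re-checked against $\heaprelupdC$.

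The main obstacle is the abstract-function case, and it is precisely the point at which the automatic machinery halts. When $f_\monoLevel$ is an abstract function implemented in C, there is no \Cogent\ subterm to recurse on: its update semantics is whatever the user supplies through $\xi_u$, and its C implementation $f_\cLevel$ is an opaque, hand-written routine. No structural argument can relate the two, so the induction cannot close this case automatically; instead it leaves an open proof obligation of exactly the shape of this theorem, specialised to the abstract function, which the user must discharge by hand against their $\xi_u$ abstraction. Supplying these per-function refinement proofs for the array library and its iterators --- and showing that they slot back into this theorem so that the overall forward simulation composes --- is the substance of the verification carried out in \Cref{sec:wordarray}.
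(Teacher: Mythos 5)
Your proposal matches the paper's approach: the paper gives no detailed proof here, instead deferring to the translation-validation technique of Rizkallah et al., and describes exactly what you reconstruct --- a per-program forward simulation using the compiler-generated relations \valrelupdC and \heaprelupdC, discharged automatically by descending through the program structure, with open proof obligations left for abstract functions implemented in C. Your additional detail on the structural cases and the role of the uniqueness invariant in preserving the heap relation is consistent with how that prior work carries out the argument.
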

\noindent This states that 
if the C embedding evaluates to a result, then the corresponding \Cogent update semantics will, given 
corresponding input values and heaps, evaluate to corresponding output values and heaps.

\subsubsection{Semantic Correspondence}
For the second stage, we must bridge the gap between the \textit{update}
semantics and \textit{value} semantics. This is accomplished by \Cogent's proof 
for all well-typed programs that the two semantics correspond.
As previously mentioned, this theorem combines both of the value typing relations
from the two semantics $\VTRV{v}{\tau}$ and $\VTRU{u}{\mu}{\tau}{r}{w}$ into one 
 one combined relation $\VTRN{u}{\mu}{v}{\tau}{r}{w}$. In addition to typing both 
 values, $u$ and $v$, this relation also requires that they represent the same conceptual value.
 Prior work~\citep{jfp,OConnor_CRALMNSK_16} proves that the update semantics refines 
 the value semantics by proving that this relation is preserved across the evaluation of both, and furthermore
 that evaluation in the update semantics implies the evaluation in the value semantics. As mentioned in \Cref{ssec:framerecs},
 this also simultaneously proves the frame requirements hold for \Cogent code.

\begin{theorem}[Value $\Rightarrow$ Update refinement]
\label{thm:updvalrefinement}
For any $e$ where $\Typing{A}{\Gamma}{e}{\tau}$, if $\ \VTRN{U}{\mu}{V}{\Gamma}{r}{w}$ and $\UpdSem{U}{e}{\mu}{u}{\mu'}$, then there exists a value $v$ and pointer sets $r' \subseteq r$ and  $w'$ such that $\ValSem{V}{e}{v}$, and $\ \VTRN{u}{\mu'}{v}{\tau}{r'}{w'}$ and $\Frame{w}{\mu}{w'}{\mu'}$.
\end{theorem}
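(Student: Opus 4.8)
The plan is to prove all three conclusions --- existence of a corresponding value-semantics evaluation, preservation of the combined relation, and the frame relation --- simultaneously by a single rule induction on the update-semantics evaluation derivation $\UpdSem{U}{e}{\mu}{u}{\mu'}$, using the typing derivation $\Typing{A}{\Gamma}{e}{\tau}$ by inversion within each case. The update evaluation is the natural induction target because a call to a \Cogent-defined function carries the evaluation of the function body as a sub-derivation, supplying exactly the induction hypothesis needed there; structural induction on $e$ would not. The combined relation $\VTRN{U}{\mu}{V}{\Gamma}{r}{w}$ does triple duty throughout, since it simultaneously records update-typing, value-typing, and the agreement that $U$ and $V$ denote the same conceptual values; in each case I must rebuild all three for the result as $\VTRN{u}{\mu'}{v}{\tau}{r'}{w'}$ with $r' \subseteq r$, alongside $\ValSem{V}{e}{v}$ and $\Frame{w}{\mu}{w'}{\mu'}$.

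For the leaf cases --- variables, literals, primitive operations, and unit --- the value-semantics derivation follows directly from the environment relation, the store is untouched so I take $\mu' = \mu$, $w' = w$, and shrink $r$ only as weakening requires, and all three frame properties hold vacuously. The substantive cases are the compound expressions whose typing rules split the context: \texttt{let}, function application, tuple and record construction, \texttt{if}, and \texttt{case}. In each, the typing rule partitions $\Gamma$ into $\Gamma_1, \Gamma_2$ and correspondingly partitions the footprint into disjoint pieces $r_1, w_1$ and $r_2, w_2$. The first move is to split the combined environment relation along this partition --- and this is exactly where \Cogent's uniqueness discipline earns its keep, as it is what guarantees the two halves have disjoint writable footprints. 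Applying the induction hypothesis to the first subevaluation then yields a value evaluation, a shrunken read set, and a frame $\Frame{w_1}{\mu}{w_1'}{\mu_1}$.

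The crux, and the step I expect to be the main obstacle, is stitching the two halves back together across the mutated store. Before the induction hypothesis can be applied to the second subexpression, the combined environment relation for $\Gamma_2$ must be re-established in $\mu_1$, which does not hold syntactically: it requires a frame-preservation lemma to the effect that whenever $\Frame{w_1}{\mu}{w_1'}{\mu_1}$ holds and a value's footprint is disjoint from $w_1$ and $w_1'$, that value's typing and conceptual agreement transfer unchanged from $\mu$ to $\mu_1$. This is the payoff of annotating the typing judgment with heap footprints, as the disjointness delivered by the context split rules out precisely the aliasing that would otherwise invalidate the second half's values. With the relation restored I apply the induction hypothesis a second time and then invoke a frame-composition lemma, showing that $\Frame{w_1}{\mu}{w_1'}{\mu_1}$ and $\Frame{w_2}{\mu_1}{w_2'}{\mu'}$ combine into $\Frame{w}{\mu}{w'}{\mu'}$ for $w = w_1 \uplus w_2$ and $w' = w_1' \uplus w_2'$; inertia, leak freedom, and fresh allocation of the composite each reduce to the corresponding component properties together with disjointness. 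Finally, rebuilding $\VTRN{u}{\mu'}{v}{\tau}{r'}{w'}$ appeals to the same frame-preservation lemma once more to carry the typing of the earlier-evaluated, now-untouched components forward into $\mu'$.

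The one remaining case, application of an \emph{abstract} C-implemented function, is discharged not by the induction hypothesis but by appeal to the proof obligations \Cogent imposes on abstract functions (\Cref{ssec:type-reqs}): these are stated so as to assert exactly that an abstract function preserves the combined relation and respects the frame relation, and so they slot directly into the open positions of this case. Type preservation (\Cref{thm:presu}) and the frame conclusion of \Cref{thm:framepres} then drop out as corollaries by erasing, respectively, the update or the value component from the combined conclusion proved here.
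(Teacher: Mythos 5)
Your proposal is correct and follows essentially the approach the paper relies on: the paper itself only cites the proof from prior work, but your rule induction on the update-semantics evaluation, the use of context splitting with disjoint footprints and frame-preservation/composition lemmas, the discharge of abstract-function calls via the FFI proof obligations (which the paper explicitly notes the theorem is parameterised by), and the recovery of Theorems~\ref{thm:presu} and~\ref{thm:framepres} as erasure corollaries all match how the result is established in the cited Cogent formalisation. No gaps to report.
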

\noindent This proof is parameterised by the assumption that the same holds for abstract functions.
We discuss how to discharge this assumption in~\Cref{sec:wordarray}.
\subsubsection{Monomorphisation}
Recall that the \Cogent compiler eliminates polymorphism by \emph{monomorphising}, that is, 
replacing polymorphic functions with specialised copies, one for each type used in the program.
Using template C, the \Cogent compiler can do the same for the user-supplied C code. 
To prove that this elimination of polymorphism preserves correctness we must show that the monomorphic 
program refines the polymorphic program. This is accomplished by replicating the compiler's monomorphisation 
operations in Isabelle/HOL as functions on deep embeddings: $\mathcal{M}_e$ to monomorphise expressions and $\mathcal{M}_v$ to monomorphise values. 
Then, the refinement relation is simply defined:
\begin{definition}[Monomorphisation Relation] 
	$$\monorel(v_\monoLevel,v_\polyLevel) \equiv (v_\monoLevel = \mathcal{M}_v(N,v_\polyLevel))$$
where $N$ is a \emph{name mapping}, supplied by the compiler, indicating which set of monomorphic functions correspond to which polymorphic functions.
\end{definition}
\noindent Proving refinement, then, proceeds on much the same lines as the other layers:
\begin{theorem}[Polymorphic $\Rightarrow$ Monomorphic refinement]\label{thm:monomorphise}
	For any function $f(x)$ with a polymorphic embedding $f_\polyLevel$ and argument $a_\polyLevel$, let $f_\monoLevel = \mathcal{M}_e(N, f_\polyLevel)$
	and $a_\monoLevel = \mathcal{M}_v(a_\polyLevel)$.
	Then we have:
		\[\begin{array}{ll}
		 \monorel(a_\monoLevel,a_\polyLevel)
			\land \ValSem{(x \mapsto a_\monoLevel)}{f_\monoLevel}{r_\monoLevel} \longrightarrow\\
			\qquad\quad\quad\exists r_\polyLevel.\  \ValSem{(x \mapsto a_\polyLevel)}{f_\polyLevel}{r_\polyLevel} \land \monorel(r_\monoLevel,r_\polyLevel)
		\end{array}\]
	\end{theorem}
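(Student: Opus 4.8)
The plan is to prove a stronger, generalised statement by rule induction, since the theorem as phrased concerns only a top-level application $f(x)$ and so does not by itself yield a usable induction hypothesis. First I would generalise over arbitrary expressions and environments: for every polymorphic expression $e_\polyLevel$ with monomorphisation $e_\monoLevel = \mathcal{M}_e(N, e_\polyLevel)$, and every pair of environments related pointwise by $\monorel$, if $\ValSem{V_\monoLevel}{e_\monoLevel}{r_\monoLevel}$ then there exists $r_\polyLevel$ with $\ValSem{V_\polyLevel}{e_\polyLevel}{r_\polyLevel}$ and $\monorel(r_\monoLevel, r_\polyLevel)$. Because the monomorphisation relation is simply the graph of $\mathcal{M}_v$, both the existential witness and the relatedness of results will, in each case, fall out of the induction hypotheses applied to the subexpressions. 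Crucially, both layers are interpreted in the \emph{value} semantics, so there is no mutable store, no heap footprint, and no frame condition to maintain; this makes the argument substantially simpler than the update-to-value or update-to-C layers.

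I would then proceed by rule induction on the value-semantics derivation $\ValSem{V_\monoLevel}{e_\monoLevel}{r_\monoLevel}$, in each case using the recursive definition of $\mathcal{M}_e$ to invert the shape of $e_\polyLevel$ from that of $e_\monoLevel$ (the subexpressions of $e_\monoLevel$ are themselves $\mathcal{M}_e$ of subexpressions of $e_\polyLevel$, so the generalised statement supplies the induction hypotheses directly). The structural cases — variables, literals, unit, primitive operations, tuples and records with their projections, variant injection, \textbf{let}, \textbf{if}, and \textbf{case} — are routine: monomorphisation commutes with each constructor, so I apply the induction hypotheses to the subderivations and reassemble the corresponding polymorphic derivation. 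The variable case reduces to the pointwise environment relation $V_\monoLevel = \mathcal{M}_v(N, V_\polyLevel)$.

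The substantive case is function application together with type instantiation, where $\mathcal{M}_e$ replaces a reference to a polymorphic function applied to concrete type arguments by a reference to the specialised monomorphic function recorded in $N$. To discharge it I would establish a commutation lemma: monomorphising the body of a polymorphic function \emph{after} substituting its type arguments yields exactly the body of the monomorphic instance selected by $N$ — that is, $\mathcal{M}_e$ commutes with type substitution, and $N$ faithfully records every instantiation occurring in the program. Given this lemma, inverting the monomorphic call exposes an evaluation of the specialised body, to which the induction hypothesis applies, and the corresponding polymorphic type-application and call rules can be rebuilt; the same commutation fact on values explains how $\mathcal{M}_v$ acts on function values looked up from the environment. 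This commutation-with-substitution lemma, which must track how type variables are resolved through $N$, is where I expect the real difficulty to lie: it is the monomorphisation analogue of a standard substitution lemma, and its statement and proof must be carefully aligned with how the compiler actually generates $N$.

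Finally, the abstract-function case cannot be treated structurally, since abstract functions are supplied externally and we have no access to their bodies. As with \Cref{thm:updvalrefinement}, I would parameterise the theorem by an assumption relating the two abstract-function environments: that for each abstract function, the monomorphic interpretation in $\xi_\monoLevel$ agrees, up to $\mathcal{M}_v$, with the polymorphic interpretation in $\xi_\polyLevel$. This assumption feeds directly into the abstract-call case, and discharging it concretely for the array and loop abstractions is precisely the obligation taken up in \Cref{sec:wordarray}. With the structural cases, the commutation lemma, and the abstract-function assumption in hand, specialising the generalised statement to $e_\polyLevel = f_\polyLevel$ under the environment $x \mapsto a_\polyLevel$ recovers the theorem.
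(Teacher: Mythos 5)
Your proposal is correct and matches the approach the \Cogent framework takes: the paper itself does not spell out a proof of this theorem (it is inherited from the prior compiler-certificate work and the text only remarks that it ``proceeds on much the same lines as the other layers''), but the standard argument is exactly what you describe --- a generalisation to arbitrary expressions and $\monorel$-related environments, rule induction on the monomorphic value-semantics derivation with a commutation lemma between $\mathcal{M}_e$ and type instantiation via the name mapping $N$, and parameterisation by an agreement assumption between $\xi_\monoLevel$ and $\xi_\polyLevel$ on abstract functions. You also correctly identify that the abstract-function assumption is the open obligation that this paper's contribution (\Cref{sec:wordarray}) discharges.
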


\subsubsection{Shallow Embedding}
Having reached the top of the refinement tower, we must cross the bridge back to shallow embeddings, \ie our pure
HOL functions that serve as an executable specification. Because this embedding is just pure functions, our 
``evaluation'' relation is just function application:
\begin{definition}[Shallow Evaluation Relation]
	$$
		\SSem{a_\shallowLevel}{f_\shallowLevel}{r_\shallowLevel}\ \equiv
		(f_\shallowLevel\ a_\shallowLevel = r_\shallowLevel)
	$$
\end{definition}
\noindent To show refinement, the compiler must once again connect deep and shallow embeddings, just as with 
the \Cogent to C refinement (\Cref{thm:corres}). Therefore, as with C, the compiler automatically produces a proof 
of this theorem on a per-program basis via translation validation. The compiler generates a refinement relation \shallowrel
for all types used in the program, and then proves:
\begin{theorem}[Shallow $\Rightarrow$ Polymorphic refinement]\label{thm:scorres}
	For any function $f(x)$ with a shallow embedding $f_\shallowLevel$ and polymorphic deep embedding $f_\polyLevel$,
	and arguments $a_\shallowLevel$ and $a_\polyLevel$ respectively, we have:
		\[\begin{array}{ll}
		\shallowrel(a_\polyLevel,a_\shallowLevel)
			\land \ValSem{(x \mapsto a_\polyLevel)}{f_\polyLevel}{r_\polyLevel} \longrightarrow\\
			\qquad\qquad\qquad\;\exists r_\shallowLevel.\ \SSem{a_\shallowLevel}{f_\shallowLevel}{r_\shallowLevel} \land \shallowrel(r_\polyLevel,r_\shallowLevel)
		\end{array}\]		
\end{theorem}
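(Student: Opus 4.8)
The plan is to exploit the fact that, by the Shallow Evaluation Relation, $\SSem{a_\shallowLevel}{f_\shallowLevel}{r_\shallowLevel}$ just abbreviates $f_\shallowLevel\ a_\shallowLevel = r_\shallowLevel$, so the existential witness is immediate: take $r_\shallowLevel \equiv f_\shallowLevel\ a_\shallowLevel$, which always exists since HOL functions are total. All the content therefore lies in the value relation $\shallowrel(r_\polyLevel, r_\shallowLevel)$ --- that whenever the polymorphic deep embedding evaluates in the value semantics to $r_\polyLevel$, the shallow function applied to the \shallowrel-related argument yields a \shallowrel-related result. I would thus reduce the theorem to a single forward-simulation obligation on \shallowrel and establish it generically by induction on the deep-embedded expression, mirroring the per-program translation-validation strategy already used for the C layer in \Cref{thm:corres}.

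Concretely, I would first generalise from whole functions to arbitrary subexpressions, introducing a correspondence predicate that relates a shallow HOL term to a deep-embedded expression under an environment pairing each shallow binding with its deep counterpart via \shallowrel. The goal is to show this predicate is closed under every \Cogent syntactic form, yielding one compositional introduction rule per constructor: variables discharge directly from the environment hypothesis; literals and primitive operations follow because the shallow translation selects the matching HOL operation; \texttt{let}, \texttt{if}, and \texttt{case} reduce to the correspondences of their subexpressions under suitably extended or refined environments; and records, tuples, and variants are handled componentwise, since \shallowrel on structured values is defined component-by-component. Each rule is proved by case analysis on the value-semantics evaluation. The per-program certificate is then assembled by the compiler by chaining these rules along the syntax tree of $f$, and the stated theorem follows by instantiating the generalised lemma at the body of $f$ with the singleton environment $(x \mapsto a_\polyLevel)$ related to $a_\shallowLevel$.

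The hard part will be higher-order and abstract functions. Because \Cogent has no closures, a function-typed value is either a reference to a top-level function or an abstract (C-implemented) one, so \shallowrel at function type must be a logical-relation-style clause: applying the deep function value to any \shallowrel-related argument evaluates to a result \shallowrel-related to the shallow HOL function applied to the corresponding shallow argument. Discharging this for defined top-level functions appeals to the correspondence already established for them --- a well-founded recursion across the call graph, since \Cogent forbids recursion --- whereas for abstract functions it cannot be proved from the \Cogent semantics at all. As with \Cref{thm:updvalrefinement}, I would therefore parameterise the statement by an assumption that each abstract function's supplied shallow model is \shallowrel-related to its value-semantics interpretation, carry these function-correspondence assumptions through the induction as hypotheses, and leave them as the user-discharged obligations that \Cref{sec:wordarray} addresses for the array library.

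Finally, polymorphism at this layer is comparatively mild: since monomorphisation is handled separately by \Cref{thm:monomorphise}, both embeddings here remain polymorphic, so I would define \shallowrel uniformly over types --- taking the relation at type-variable positions as a parameter, in the same logical-relations spirit as the function case --- so that the inductive cases stay purely structural and no explicit type-instantiation bookkeeping is needed.
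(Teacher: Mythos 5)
Your proposal matches the paper's approach: the paper states that this theorem is proved automatically by the compiler on a per-program basis via translation validation (generating \shallowrel for all types used and chaining compositional correspondence rules over the syntax tree), which is exactly the strategy you describe, and your treatment of abstract functions as carried assumptions matches the paper's handling (see its footnote noting that the shallow phase currently assumes abstract function correctness implicitly, with the obligations copied out and discharged manually). No substantive divergence.
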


\subsection{Overall Refinement}
To combine all of these refinement phases, we first define a refinement relation 
for all the layers of refinement:
\begin{definition}[Combined Relation] 
	\[\begin{array}{l}
		\refrel{black}{R}{\cLevel}{\shallowLevel}(v_\cLevel,\sigma,v_\updateLevel,\mu,v_\monoLevel,v_\polyLevel,v_\shallowLevel,\tau,r,w) \\[0.2em]
		\qquad \equiv \shallowrel(v_\polyLevel,v_\shallowLevel)  \land \monorel (v_\monoLevel,v_\polyLevel) \\
		\qquad \land\ \VTRN{v_\updateLevel}{\mu}{v_\monoLevel}{\tau}{r}{w} \\[0.2em]
		\qquad \land\ \valrelupdC(v_\cLevel,v_\updateLevel) \land  \heaprelupdC(\sigma,\mu) 
	\end{array}\]
\end{definition}
\begin{theorem}[Combined Refinement]
	\label{thm:everything}
	Given a function $f(x)$ with embeddings $f_\cLevel$, $f_\monoLevel$, $f_\polyLevel$, $f_\shallowLevel$; 
	argument value $a_\cLevel$, $a_\updateLevel$, $a_\monoLevel$, $a_\polyLevel$,$a_\shallowLevel$; heap $\sigma$ and store $\mu$, we show: 
	\[\begin{array}{ll}
		\refrel{black}{R}{\cLevel}{\shallowLevel}(a_\cLevel,\sigma,a_\updateLevel,\mu,a_\monoLevel,a_\polyLevel,a_\shallowLevel,\tau,r,w)
		\land \CSem{a_\cLevel}{f_\cLevel}{\sigma}{v_\cLevel}{\sigma'} \longrightarrow \\		
		\quad \exists\ r'\ w'\ \mu'.\ r' \subseteq r \land \framerelargs{w}{\mu}{w'}{\mu'} \\
		\qquad \land\ \exists\ v_\updateLevel\ v_\monoLevel\ v_\polyLevel\ v_\shallowLevel.\\
		\qquad\quad \land\ \UpdSemAb{\xi}{(x \mapsto a_\updateLevel)}{f_\monoLevel}{\mu}{v_\updateLevel}{\mu'}\\
		\qquad\quad \land\ \ValSemA{\xi}{(x \mapsto a_\monoLevel)}{f_\monoLevel}{v_\monoLevel}\\
		\qquad\quad \land\ \ValSemA{\xi}{(x \mapsto a_\polyLevel)}{f_\polyLevel}{v_\polyLevel}\\
		\qquad\quad \land\ \SSem{a_\shallowLevel}{f_\shallowLevel}{v_\shallowLevel}\\
		\qquad\quad \land\	\refrel{black}{R}{\cLevel}{\shallowLevel}(v_\cLevel,\sigma',v_\updateLevel,\mu',v_\monoLevel,v_\polyLevel,v_\shallowLevel,\tau',r',w')
	\end{array}\]
\end{theorem}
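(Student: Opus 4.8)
The plan is to prove \Cref{thm:everything} by \emph{composing} the four per-layer refinement theorems (\Cref{thm:corres}, \Cref{thm:updvalrefinement}, \Cref{thm:monomorphise}, and \Cref{thm:scorres}) into a single simulation that climbs the entire tower of \Cref{fig:refinement} in one pass. We are handed exactly one execution --- the C run $\CSem{a_\cLevel}{f_\cLevel}{\sigma}{v_\cLevel}{\sigma'}$ at the bottom of the hierarchy --- and must manufacture executions at the four layers above it. Since each individual theorem has the shape ``if the lower layer evaluates and the inputs are related, then the upper layer evaluates and the outputs are related,'' the conclusion of each is precisely the shape needed to discharge the hypotheses of the next. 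The argument is therefore a chain of instantiations rather than a fresh induction: no new semantic reasoning is required, only the transport of related data from one layer to the one above it.

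Concretely, I would first unfold the combined input relation $\refrel{black}{R}{\cLevel}{\shallowLevel}$ into its five conjuncts $\shallowrel(a_\polyLevel,a_\shallowLevel)$, $\monorel(a_\monoLevel,a_\polyLevel)$, $\VTRN{a_\updateLevel}{\mu}{a_\monoLevel}{\tau}{r}{w}$, $\valrelupdC(a_\cLevel,a_\updateLevel)$, and $\heaprelupdC(\sigma,\mu)$, and then proceed bottom-up. (1) Feeding $\valrelupdC(a_\cLevel,a_\updateLevel)$, $\heaprelupdC(\sigma,\mu)$, and the given C run into \Cref{thm:corres} yields the update execution $\UpdSem{(x \mapsto a_\updateLevel)}{f_\monoLevel}{\mu}{v_\updateLevel}{\mu'}$ with $\valrelupdC(v_\cLevel,v_\updateLevel)$ and $\heaprelupdC(\sigma',\mu')$. (2) Feeding this update run together with the argument typing $\VTRN{a_\updateLevel}{\mu}{a_\monoLevel}{\tau}{r}{w}$ into \Cref{thm:updvalrefinement} yields the monomorphic value run $\ValSem{(x \mapsto a_\monoLevel)}{f_\monoLevel}{v_\monoLevel}$, the preserved typing $\VTRN{v_\updateLevel}{\mu'}{v_\monoLevel}{\tau'}{r'}{w'}$ at the result type $\tau'$, the containment $r' \subseteq r$, and the frame relation $\framerelargs{w}{\mu}{w'}{\mu'}$. (3) Feeding this into \Cref{thm:monomorphise} with $\monorel(a_\monoLevel,a_\polyLevel)$ yields $\ValSem{(x \mapsto a_\polyLevel)}{f_\polyLevel}{v_\polyLevel}$ and $\monorel(v_\monoLevel,v_\polyLevel)$. (4) Feeding this into \Cref{thm:scorres} with $\shallowrel(a_\polyLevel,a_\shallowLevel)$ yields $\SSem{a_\shallowLevel}{f_\shallowLevel}{v_\shallowLevel}$ and $\shallowrel(v_\polyLevel,v_\shallowLevel)$. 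Finally I would instantiate the existentials with the witnesses $r'$, $w'$, $\mu'$, $v_\updateLevel$, $v_\monoLevel$, $v_\polyLevel$, $v_\shallowLevel$ accumulated along the way, reassembling the five preserved conjuncts into $\refrel{black}{R}{\cLevel}{\shallowLevel}$ on the outputs and discharging $r' \subseteq r$ and the frame relation directly from step (2).

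The purely mechanical obstacles are minor: threading the result type $\tau'$, which emerges from the type-preservation content of \Cref{thm:updvalrefinement} (\ie the typing of $f$ at $\tau \to \tau'$), and checking that the environment $(x \mapsto a)$ is well-typed at each layer, which follows from the typing conjunct of the unfolded input relation. The genuinely load-bearing precondition --- and the real difficulty hiding behind the word ``compose'' --- is that the abstract-function environments $\xi_\mathsf{u}$ and $\xi_\mathsf{v}$ (and their C-level counterpart $\xi_\cLevel$) are \emph{mutually consistent}: each of \Cref{thm:corres}, \Cref{thm:updvalrefinement}, and \Cref{thm:monomorphise} is stated relative to ``the assumption that the same holds for abstract functions,'' so the chain only closes once those per-function FFI obligations --- that each C abstraction refines its update interpretation, which in turn refines its value interpretation, consistently for every abstract function appearing in $f$ --- have been discharged. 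I would therefore state \Cref{thm:everything} relative to a single assumption bundling these obligations, and observe that the composition is sound exactly when that bundle is inhabited; establishing its inhabitation for the array and iterator library is precisely the contribution developed in \Cref{sec:wordarray}.
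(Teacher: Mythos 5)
Your proposal is correct and matches the paper's (implicit) argument exactly: \Cref{thm:everything} is obtained by unfolding the combined relation into its five conjuncts and chaining \Cref{thm:corres}, \Cref{thm:updvalrefinement}, \Cref{thm:monomorphise}, and \Cref{thm:scorres} bottom-up, each theorem's conclusion supplying the next one's hypotheses and witnesses. You also correctly identify the load-bearing caveat the paper makes right after the theorem statement --- that the chain closes only relative to the per-abstract-function obligations, which is why it suffices to prove this same theorem for each abstract function as done in \Cref{sec:wordarray}.
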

\noindent 
Because this theorem encompasses \emph{all} levels of refinement as well as type preservation and the frame	requirements,
it is sufficient to prove this theorem about (each embedding of) each abstract function, where the 
semantics of the deep embeddings $f_\polyLevel$ and $f_\monoLevel$ is given by the user-supplied environments $\xi_v$ and $\xi_u$.
A proof of this theorem for an abstract function is sufficient to integrate its verification with the 
\Cogent verification chain.

\subsection{Requirements on Abstract Types}
\label{ssec:type-reqs}
\Cref{thm:everything} encompasses all requirements the \Cogent FFI places on abstract functions, 
but users can also provide abstract \emph{types}, extending the dynamic value typing rules as they 
see fit. Therefore, \Cogent imposes several constraints on the value typing judgements which ensure 
that key type system invariants such as memory safety are maintained for abstract types.

Consider an abstract type of the form ${``\AbsTy{A}{\overline\tau}\text{''}}$, where \AbsN{A} is the name of
the abstract type and $\overline\tau$ is a list a type parameters. Because it is not surrounded with the !-operator,
it is \emph{linear} and therefore writable. The requirements of the user-defined value typing relation are as follows.
\begin{definition}[Value Typing Requirements]\label{eq:valtype}
\begin{align*}
	\textrm{bang}_\mathsf{v}\textrm{:}\quad & \VTRV{v}{\AbsTy{A}{\overline\tau}} \longrightarrow
	\VTRV{v}{(\AbsTy{A}{\overline{\tau!}})!}\\
	\textrm{bang}_\mathsf{u}\textrm{:}\quad & \VTRU{u}{\mu}{\AbsTy{A}{\overline\tau}}{r}{w}
	\longrightarrow \VTRU{u}{\mu}{(\AbsTy{A}{\overline{\tau!}})}{r
	\cup w}{\emptyset}\\
	\textrm{no-alias:}\quad & \VTRU{u}{\mu}{\AbsTy{A}{\overline\tau}}{r}{w} \longrightarrow r
	\cap w = \emptyset\\	
	\textrm{valid:}\quad & \VTRU{u}{\mu}{\AbsTy{A}{\overline\tau}}{r}{w} \longrightarrow
	\forall p \in r \cup w.\ \mu(p) \neq \bot\\
	\textrm{frame:}\quad & \VTRU{u}{\mu_i}{\AbsTy{A}{\overline\tau}}{r}{w} \longrightarrow
	\framerelargs{w_i}{\mu_i}{w_o}{\mu_o} \\
	& \longrightarrow (r \cup w) \cap w_i =
	\emptyset \longrightarrow
	\VTRU{u}{\mu_o}{\AbsTy{A}{\overline\tau}}{r}{w}
\end{align*}
\end{definition}
\noindent The \textit{bang} rules ensure that abstract values respect the
!-operator, \ie when \CBang{} is applied to a
value, the value becomes read-only;
\textit{no-alias} ensures that there is no aliasing
of writable pointers by readable pointers within a value;
\textit{valid} enforces that all pointers are valid, \ie
point to values;
and \textit{frame} ensures that an abstract
value is unchanged if it is not part of the store that is currently being
modified. For a read-only type $(\AbsTy{A}{\overline\tau})!$, the requirements are the same, save that the writable pointer set $w$ must also be empty:
\begin{align*}
	\textrm{read-only:}\quad &
	\VTRU{u}{\mu}{(\AbsTy{A}{\overline\tau})!}{r}{w} \longrightarrow w
	= \emptyset\\	
\end{align*}

\subsection{Summary of Requirements}
As can be seen from the previous sections, proof engineers must provide a number of implementations, abstractions, and proofs for each function and type they import.
We briefly summarise these here. For each abstract type imported from C, the proof engineer simply needs to prove the requirements of \Cref{eq:valtype}. For each imported foreign
function $f$, proof engineers must define a version of the function for all semantic levels in the Cogent hierarchy. That is, they must define
a C implementation, an update semantics ($\xi_\updateLevel\ f$), a monomorphic value semantics ($\xi_\monoLevel\ f$), a polymorphic value semantics ($\xi_\polyLevel\ f$), 
and a shallow embedding in HOL. 
Once all embeddings have been provided, overall refinement (\Cref{thm:everything}) must be proven for the foreign function $f$, which necessitates proofs of all 
of our other preservation and refinement theorems:
\begin{itemize}
	\item Type preservation in the update semantics (\Cref{thm:framepres}): 
		required by all functions that call $f$ to prove type preservation (\Cref{thm:framepres}), and by $f$ and all functions that call it to prove refinement (\Cref{thm:corres,thm:updvalrefinement}).
	\item Type preservation in the monomorphic value semantics (\Cref{thm:presu}):
		required by all functions that call $f$ to prove type preservation (\Cref{thm:presu}), and by $f$ and all functions that call it to prove refinement (\Cref{thm:updvalrefinement,thm:monomorphise}).
	\item Refinement from the update semantics to C (\Cref{thm:corres}): required by all functions that call $f$ to prove refinement (\Cref{thm:corres}).
	\item Refinement from the value semantics to the update semantics (\Cref{thm:updvalrefinement}): required by all functions that call $f$ to prove refinement (\Cref{thm:updvalrefinement}).
	\item Refinement from the polymorphic to monomorphic value semantics (\Cref{thm:monomorphise}): required by all functions that call $f$ to prove refinement (\Cref{thm:monomorphise}). 
	\item Refinement from the shallow embedding to the polymorphic value semantics (\Cref{thm:scorres}): required by all functions that call $f$ to prove refinement (\Cref{thm:scorres}).
\end{itemize}

\section{Arrays}\label{sec:wordarray}
\emph{Arrays} in \Cogent are stored on the heap
with elements having any \emph{unboxed} type. Arrays of pointers are a 
separate data structure in \cogent, as their interface is complicated 
by the uniqueness type system.
We verify five array operations:
the \walen, \waget, and \waput functions in \autoref{ex:array}, and
the \wafold and \wamap iterators in \autoref{ex:iterators}.

\subsection{Specification and Implementation}
\label{ssec:impl}
\begin{figure}
	\begin{tabular}{l}
\Walen{s} : [\CTypeVar{a}] \CFunctionArrow \CPrimType{word32}\\
\Walen{s} \CVar{xs} =
	\CFunName{of\_nat} \CLParen\CFunName{List.length} \CVar{xs}\CRParen\\
\Waget{s} : [\CTypeVar{a}] \CFunctionArrow \CPrimType{word32}
	\CFunctionArrow \CTypeVar{a}\\
\Waget{s} \CVar{xs} \CVar{i} \CVar{d} = \CKwd{if} \CFunName{unat}
	\CVar{i} < \CFunName{List.length} \CVar{xs} \\\qquad\qquad\quad\!\!\!\!\!\!\;\!  \CKwd{then} \CVar{xs} !
	\CLParen\CFunName{unat} \CVar{i}\CRParen
	\CKwd{else} \CVar{d}\\
\Waput{s}: [\CTypeVar{a}] \CFunctionArrow \CPrimType{word32}
	\CFunctionArrow \CTypeVar{a} \CFunctionArrow [\CTypeVar{a}]\\
\Waput{s} \CVar{xs} \CVar{i} \CVar{v} =
	\CVar{xs}[\CFunName{unat}\ \CVar{i} $\coloneqq$ \CVar{v}]\\
\Wafold{s} : \CLParen \CTypeVar{b} \CFunctionArrow \CTypeVar{a}
	\CFunctionArrow \CTypeVar{c}
	\CFunctionArrow \CTypeVar{b} \CRParen
	\CFunctionArrow \CTypeVar{b} \CFunctionArrow [\CTypeVar{a}] \\\qquad\!\!\!\!
	\CFunctionArrow \CPrimType{word32} \CFunctionArrow \CPrimType{word32} \CFunctionArrow \CTypeVar{c}
	\CFunctionArrow \CTypeVar{b}\\
\Wafold{s} \CVar{f} \CVar{acc} \CVar{xs} \CVar{frm} \CVar{to}
	\CVar{obs} =\\\quad \CFunName{List.fold}
	\CLParen $\lambda$\CVar{acc} \CVar{el}. \CVar{f} \CVar{acc} \CVar{el}
	\CVar{obs} \CRParen \CVar{acc} \CLParen \CFunName{slice} \CVar{frm} \CVar{to} \CVar{xs} \CRParen\\
\Wamap{s} : \CLParen \CTypeVar{b} \CFunctionArrow \CTypeVar{a}
	\CFunctionArrow \CTypeVar{c}
	\CFunctionArrow \CLParen \CTypeVar{a}, \CTypeVar{b} \CRParen \CRParen
	\CFunctionArrow \CTypeVar{b} \CFunctionArrow [\CTypeVar{a}]\\\qquad\qquad\quad\!\!\!\!
	\CFunctionArrow \CPrimType{word32} \CFunctionArrow \CPrimType{word32} \CFunctionArrow \CTypeVar{c}
	\CFunctionArrow \CLParen [\CTypeVar{a}], \CTypeVar{b} \CRParen\\
\Wamap{s} \CVar{f} \CVar{acc} \CVar{xs} \CVar{frm} \CVar{to} \CVar{obs} = \\
	\quad \CKwd{let} \CLParen \CVar{xs'}, \CVar{acc'} \CRParen = 
	\CFunName{List.fold}\\\qquad
	\CLParen $\lambda$\CVar{el} \CLParen \CVar{xs}, \CVar{acc} \CRParen.
	\CKwd{let} \CLParen \CVar{el'}, \CVar{acc'} \CRParen = \CVar{f}
	\CVar{acc} \CVar{el} \CVar{obs} \\\qquad\qquad\qquad\qquad\;\,\ \!\!\!\!\CKwd{in} \CLParen \CVar{xs} @ [\CVar{el'}],
	\CVar{acc'} \CRParen \CRParen\\\qquad  \CLParen \CFunName{slice} \CVar{frm}
	\CVar{to} \CVar{xs} \CRParen \CLParen [], \CVar{acc}
	\CRParen\\\quad
	\CKwd{in} \CLParen \CFunName{take} \CVar{frm} \CVar{xs} @ \CVar{xs'} @
	\CFunName{drop} \CLParen \CFunName{max} \CVar{frm} \CVar{to} \CRParen
	\CVar{xs}, \CVar{acc'} \CRParen
\end{tabular}
	\caption{Functional correctness specification of the
	array operations in Isabelle/HOL. }
	\label{fig:spec}
\end{figure}
Our operations on arrays are specified as Isabelle functions on lists.
The specification for the array functions presented in \Cref{ex:array,ex:iterators} 
is provided in \Cref{fig:spec}. 
Most array operations have obvious analogues in Isabelle/HOL's list library. The library functions
$\mathsf{unat}$ and $\mathsf{of\dash{}nat}$ convert words and natural numbers,
$\mathsf{take}\ n\ xs$ and $\mathsf{drop}\ n\ xs$ return and remove the first $n$ elements of
the list $xs$ respectively, $\mathsf{slice}\ n\ m\ xs$ returns the sublist 
that starts at index $n$ and ends at index $m$ of the list $xs$, $!$ returns the $i^{th}$ element of a list and $@$ appends two lists.

The operations \waget and \wamap do not have straightforward analogues in the Isabelle library.
\wamap is not part of the library at all, and must be implemented in terms of \wafold, whereas
\waget behaves differently to its corresponding list operation:
our implementation of \waget is not undefined when the given index is out of bounds,
but instead returns the provided default value.

\begin{figure}
	
	\scriptsize
	\begin{tabular}{l}	
	\CKwd{struct} \CTypeName{WArray\typeparam{T}} \CLBrace\\
	\quad \CPrimType{u32} \CFieldName{len};\\
	\quad \typeparam{T}* \CFieldName{vals};\\
	\CRBrace;\\
	\CPrimType{u32}
		\CFunName{length}\CLParen\CTypeName{WArray\typeparam{T}}
		*\CVar{arr}\CRParen\CLBrace\;
		\CKwd{return} \CVar{arr}$\rightarrow$\CVar{len};\;
		\CRBrace\\
	\typeparam{T} \CFunName{get}\CLParen\CTypeName{WArray\typeparam{T}}
		*\CVar{arr}, \CPrimType{u32} \CVar{i}, \typeparam{T}\ \CVar{def}\CRParen \CLBrace\\
		\quad \CKwd{if}\CLParen\CVar{i} < \CVar{arr}$\rightarrow$\CVar{len}\CRParen
		\CLBrace\;\CKwd{return} \CVar{arr}$\rightarrow$\CVar{vals}[\CVar{i}]; \,\CRBrace\\
		\quad \CKwd{return} \CVar{def};\\
		\CRBrace\\
	\CTypeName{WArray\typeparam{T}} *\CFunName{put}\CLParen\CTypeName{WArray\typeparam{T}}
		*\CVar{arr}, \CPrimType{u32} \CVar{i}, \typeparam{T} \CVar{v}\CRParen \CLBrace\\
		\quad \CKwd{if}\CLParen\CVar{i} < \CVar{arr}$\rightarrow$\CVar{len}\CRParen
		\CLBrace\;\CVar{arr}$\rightarrow$\CVar{vals}[\CVar{i}] = \CVar{v};\,\CRBrace\\
		\quad \CKwd{return} \CVar{arr};\\
		\CRBrace\\
		\typeparam{A} \CFunName{fold}\CLParen
		\CPrimType{fid} \CVar{f},
		\typeparam{A} \CVar{acc},
		\CTypeName{WArray\typeparam{T}} *\CVar{arr},
		\CPrimType{u32} \CVar{frm},
		\CPrimType{u32} \CVar{to},
		\typeparam{O} \CVar{obsv}\CRParen \CLBrace\\
		\quad \CPrimType{u32} \CVar{i,e};\\
		\quad \CVar{e} = \CVar{arr}$\rightarrow$\CVar{len};\\
		\quad \CKwd{if}\CLParen\CVar{to} < e\CRParen
		\CLBrace\;\CVar{e} = \CVar{to};\,\CRBrace\\
		\quad \CKwd{for}\CLParen\CVar{i} = \CVar{frm}; \CVar{i} < \CVar{e};
		\CVar{i}++\CRParen\;
		\CLBrace\CVar{acc} =
		\CFunName{\textcolor{red}{dispatch\_fold}}\CLParen\CVar{f},
		\CVar{arr}$\rightarrow$\CVar{vals}[\CVar{i}], \CVar{acc},
		\CVar{obsv}\CRParen;\,\CRBrace\\
		\quad \CKwd{return} \CVar{acc};\\
		\CRBrace\\
	\CTypeName{\textcolor{gray}{ArrayAcc}} \CFunName{mapAccum}\CLParen
		\CPrimType{fid} \CVar{f},
		\typeparam{A} \CVar{acc},
		\CTypeName{WArray\typeparam{T}} *\CVar{arr},
		\CPrimType{u32} \CVar{frm},
		\CPrimType{u32} \CVar{to},
		\typeparam{O} \CVar{obsv}\CRParen \CLBrace\\
		\quad \CPrimType{u32} \CVar{i,e};\\
		\quad \CVar{e} = \CVar{arr}->\CVar{len};\\
		\quad \CKwd{if}\CLParen\CVar{to} < e\CRParen
		\CLBrace\;\CVar{e} = \CVar{to};\,\CRBrace\\
		\quad \CKwd{for}\CLParen\CVar{i} = \CVar{frm}; \CVar{i} < \CVar{e};
		\CVar{i}++\CRParen \CLBrace\\
		\qquad \CTypeName{\textcolor{gray}{ElemAcc}} \CVar{ea} =
		\CFunName{\textcolor{red}{dispatch\_mapAccum}}\CLParen\CVar{f},
		\CVar{arr}$\rightarrow$\CVar{vals}[\CVar{i}], \CVar{acc},
		\CVar{obsv}\CRParen;\\
		\qquad\CVar{arr}$\rightarrow$\CVar{vals}[\CVar{i}] = \CVar{ea}.\CVar{elem};\\
		\qquad\CVar{acc} = \CVar{ea}.\CVar{acc};\\
		\quad$ $\CRBrace\\
		\quad\CTypeName{\textcolor{gray}{ArrayAcc}} \CVar{ret} = \CLBrace.\CVar{arr} =
		\CVar{arr}, .\CVar{acc} = \CVar{acc}\CRBrace;\\
		\quad\CKwd{return} \CVar{ret};\\
		\CRBrace\\
	\end{tabular}
	\caption{C implementation of key operations on arrays.}
	\label{fig:impl}
\end{figure}

In \autoref{fig:impl}, we present a version of the template C
implementation of arrays with some syntactic simplifications for presentation. Quoted type parameters that are later instantiated
to concrete \Cogent types, such as \typeparam{T} or \typeparam{O}, are highlighted in blue.
Generated C structure types for Cogent records and tuples such as \textcolor{gray}{$\mathit{ArrayAcc}$}, which normally have compiler-generated 
names, are written with human-friendly names in grey. The \emph{dispatch} functions highlighted in \textcolor{red}{red}
are how \Cogent deals with higher-order functions: Because the C-parser semantics do not accommodate 
function pointers, \Cogent instead assigns a unique identifier to each function at compile time, and defines
a dispatch function for each function type. This function takes a function identifier as an
argument and calls the function that corresponds to that identifier.

\subsection{Proving Refinement}
As previously mentioned, to verify our abstract C library, we must additionally provide Isabelle/HOL abstractions of our C implementation 
that can connect with the automatically-generated \Cogent embeddings. Specifically, we must provide 
Isabelle/HOL abstractions for each abstract function (\ie \waput, \waget, etc.), as well as a combined 
value-update-type correspondence relation, analogous to the refinement relation for~\Cref{thm:updvalrefinement},
 for each abstract type (\ie \CTypeName{WArray\typeparam{T}}).

 \subsubsection{Abstractions}
\begin{figure*}
	\begin{subfigure}[t]{0.50\textwidth}
\begin{tabular}{l}
\Walen{u}\CLParen$\mu$, \CVar{x}\CRParen \CLParen$\mu$',
	\CVar{y}\CRParen = \\\quad $\exists$\CVar{\tau} \CVar{len} \CVar{p}. \CLParen
	\CVar{\mu} \CVar{x} = \CConstructor{UWA} \CVar{\tau} \CVar{len} \CVar{p} \CRParen
	$\land$ \CLParen \CVar{y} = \CVar{len} \CRParen $\land$ \CLParen
	$\mu$ = $\mu$' \CRParen\\
\Waget{u}\CLParen\CVar{\mu}, (\CVar{x},
	\CVar{i}, \CVar{d})\CRParen \CLParen\CVar{\mu'}, \CVar{y}\CRParen =\\
\quad 	$\exists$\CVar{\tau} \CVar{len} \CVar{p}. \CLParen
	\CVar{\mu} \CVar{x} = \CConstructor{UWA} \CVar{\tau} \CVar{len}
	\CVar{p}\CRParen\\\qquad $\land$ \CLParen\CVar{i} < \CVar{len}
	$\longrightarrow$
	\CVar{\mu} (\CVar{p}+(\CFunName{size} \CVar{t})$\times$\CVar{i}) = \CVar{y}\CRParen
	\\\qquad$\land$ \CLParen\CVar{i} $\geq$ \CVar{len} $\longrightarrow$ \CVar{y} =
	\CVar{d}\CRParen $\land$ \CLParen\CVar{\mu} = \CVar{\mu'}\CRParen\\
\Waput{u}\CLParen\CVar{\mu}, (\CVar{x}, \CVar{i}, \CVar{v})\CRParen
	\CLParen\CVar{\mu'}, \CVar{y}\CRParen =\\
	\quad $\exists$\CVar{\tau} \CVar{len} \CVar{p}.
	\CLParen\CVar{\mu} \CVar{x} = \CConstructor{UWA} \CVar{\tau} \CVar{len}
	\CVar{p}\CRParen \\ \qquad $\land$ \CLParen\CVar{i} < \CVar{len}
	$\longrightarrow$
	\CVar{\mu}(\CVar{p}+(\CFunName{size} \CVar{\tau})$\times$\CVar{i} $\mapsto$ \CVar{v}) = \CVar{\mu'}\CRParen
	\\\qquad $\land$ \CLParen\CVar{i} $\geq$ \CVar{len} $\longrightarrow$
	\CVar{\mu} = \CVar{\mu'}\CRParen $\land$ \CLParen\CVar{x} =
	\CVar{y}\CRParen\\
\Wafold{u} \CLParen\CVar{\mu_1},
	(\CVar{f}, \CVar{acc}, \CVar{x}, \CVar{s}, \CVar{e}, \CVar{obs})\CRParen
	\CLParen\CVar{\mu_3}, \CVar{y}\CRParen =\\\quad $\exists$\CVar{\tau} \CVar{len} \CVar{p}.
	\CLParen\CVar{\mu_1} \CVar{x}  = \CConstructor{UWA} \CVar{\tau} \CVar{len}
	\CVar{p}\CRParen \\\qquad $\land$ \CLParen\CVar{s} < \CVar{len} $ \land$
	\CVar{s} < \CVar{e}
	$\longrightarrow$\\\qquad\quad
	$\exists$\CVar{v} \CVar{acc'} \CVar{\mu_2}.
	\CVar{\mu_1} (\CVar{p}+(\CFunName{size} \CVar{\tau})$\times$\CVar{s}) =
	\CVar{v} \\\qquad\qquad $\land$
	$\UpdSemAb{\xi_u}{[a \mapsto (\text{\CVar{v}}, \mathit{acc},
	\mathit{obs})]}{f a\ }{\ \mu_1}{\mathit{acc'}\ }{\ \mu_2}$\\\qquad\qquad $\land$
	\Wafold{u} \CLParen\CVar{\mu_2},
	(\CVar{f}, \CVar{acc'}, \CVar{x}, \CVar{s}+1, \CVar{e}, \CVar{obs})\CRParen
	\CLParen\CVar{\mu_3}, \CVar{y}\CRParen\CRParen
	\\\qquad $\land$ \CLParen\CVar{s} $\geq$ \CVar{len} $\lor$ \CVar{s} $\geq$
	\CVar{e}$\longrightarrow$
	\CVar{\mu_1} = \CVar{\mu_3} $\land$ \CVar{y} = \CVar{acc}\CRParen\\
\Wamap{u} \CLParen\CVar{\mu_1},
	(\CVar{f}, \CVar{acc}, \CVar{x}, \CVar{s}, \CVar{e}, \CVar{obs})\CRParen
	\CLParen\CVar{\mu_4}, \CVar{y}\CRParen =
	\\\quad $\exists$\CVar{\tau} \CVar{len} \CVar{p}.
	\CLParen\CVar{\mu_1} \CVar{x} = \CConstructor{UWA} \CVar{\tau} \CVar{len}
	\CVar{p}\CRParen \\\qquad $\land$ \CLParen\CVar{s} < \CVar{len} $ \land$
	\CVar{s} < \CVar{e}
	$\longrightarrow$
	\\\qquad\quad$\exists$\CVar{v} \CVar{v'} \CVar{acc'} \CVar{\mu_2}
	\CVar{\mu_3}.
	\CVar{\mu_1} (\CVar{p}+(\CFunName{size} \CVar{\tau})$\times$\CVar{s}) =
	\CVar{v} \\\qquad\qquad$\land$
	$\UpdSemAb{\xi_u}{[a \mapsto (\text{\CVar{v}}, \mathit{acc},
	\mathit{obs})]}{f a\ }{\ \mu_1}{(\text{\CVar{v'}}, \mathit{acc'})\ }{\ \mu_2}$ \\\qquad\qquad$\land$
	\CVar{\mu_3} = \CVar{\mu_2}(\CVar{p}+(\CFunName{size} \CVar{t})$\times$\CVar{s}
	$\mapsto$ \CVar{v'}) \\\qquad\qquad $\land$
	\Wamap{u} \CLParen\CVar{\mu_3},
	(\CVar{f}, \CVar{acc'}, \CVar{x}, \CVar{s}+1, \CVar{e}, \CVar{obs})\CRParen
	 \CLParen\CVar{\mu_4}, \CVar{y}\CRParen\CRParen\\\qquad
	$\land$ \CLParen\CVar{s} $\geq$ \CVar{len} $\lor$ \CVar{s} $\geq$
	\CVar{e}$\longrightarrow$
	\CVar{\mu_1} = \CVar{\mu_4} $\land$ \CVar{y} = (\CVar{x}, \CVar{acc})\CRParen
\end{tabular}
		\caption{C abstractions for the \textit{update} semantics.}
		\label{fig:uembed}
	\end{subfigure}\hfill
	\begin{subfigure}[t]{0.475\textwidth}
\begin{tabular}{l}
\Walen{v}\CVar{x} \CVar{y} = \\\quad$\exists$\CVar{\tau} \CVar{xs}. \CLParen
	\CVar{x} = \CConstructor{VWA} \CVar{\tau} \CVar{xs} \CRParen
	$\land$ \CLParen \CFunName{List.length} \CVar{xs} =
	\CFunName{unat} \CVar{y} \CRParen\\
\Waget{v}\CLParen\CVar{x}, \CVar{i}\CRParen \CVar{y} =\\
	\quad $\exists$\CVar{\tau} \CVar{xs}. \CLParen
	\CVar{x} = \CConstructor{VWA} \CVar{\tau} \CVar{xs} \CRParen \\\qquad $\land$
	\CLParen\CFunName{unat} \CVar{i} < \CFunName{List.length} \CVar{xs}
	$\longrightarrow$ \CVar{xs} ! \CVar{i} = \CVar{y} \CRParen \\\qquad $\land$
	\CLParen\CFunName{unat} \CVar{i} $\geq$
	\CFunName{List.length} \CVar{xs} $\longrightarrow$ \CVar{y} =
	0\CRParen\\
\Waput{v}\CLParen\CVar{x}, \CVar{i}, \CVar{v}\CRParen \CVar{y} =
	\\\quad $\exists$\CVar{\tau} \CVar{xs}. \CLParen
	\CVar{x} = \CConstructor{VWA} \CVar{\tau} \CVar{xs} \CRParen  $\land$
	\CLParen\CVar{y} = \CConstructor{VWA} \CVar{\tau} \CVar{xs}[\CFunName{unat} \CVar{i}
	$\coloneqq$ \CVar{v}]\CRParen\\
\Wafold{v}
	\CLParen\CVar{f}, \CVar{acc}, \CVar{x}, \CVar{s}, \CVar{e}, \CVar{obs}\CRParen
	\CVar{y} =\\\quad
	 $\exists$\CVar{\tau} \CVar{xs}. \CLParen
	\CVar{x} = \CConstructor{VWA} \CVar{\tau} \CVar{xs} \CRParen \\\qquad $\land$
	\CLParen\CFunName{unat} \CVar{s} < \CFunName{List.length} \CVar{xs}
	$\land$ \CVar{s} < \CVar{e} $\longrightarrow$ 
	\\\qquad\quad $\exists$\CVar{v} \CVar{acc'}. (\CVar{xs} ! \CVar{s} = \CVar{v})
	\\\qquad\qquad$\land$
	$\ValSemA{\xi_v}{[a \mapsto (\mathit{v}, \mathit{acc}, \mathit{obs})]}{f
	a}{\mathit{acc'}}$ 
	\\\qquad\qquad$\land\ $\Wafold{v}
	\CLParen\CVar{f}, \CVar{acc'}, \CVar{x}, \CVar{s}+1, \CVar{e}, \CVar{obs}\CRParen
	\CVar{y}\CRParen \\\qquad$\land$
	\CLParen\CFunName{unat} \CVar{s}  $\geq$
	\CFunName{List.length} \CVar{xs} $\lor$ \CVar{s} $\geq$ \CVar{e}
	$\longrightarrow$ \CVar{y} = \CVar{acc}\CRParen\\
	\Wamap{v}
	\CLParen\CVar{f}, \CVar{acc}, \CVar{x}, \CVar{s}, \CVar{e}, \CVar{obs}\CRParen
	\CVar{y} =
	\\\quad$\exists$\CVar{\tau} \CVar{xs}. \CLParen
	\CVar{x} = \CConstructor{VWA} \CVar{\tau} \CVar{xs} \CRParen\\\qquad $\land$
	\CLParen\CFunName{unat} \CVar{s} < \CFunName{List.length} \CVar{xs}
	$\land$ \CVar{s} < \CVar{e} $\longrightarrow$
	\\\qquad\quad$\exists$\CVar{v} \CVar{v'} \CVar{acc'} \CVar{x'}.
	\CVar{xs} ! \CVar{s} = \CVar{v}	\\\qquad\qquad$\land$
	$\ValSemA{\xi_v}{[a \mapsto (\mathit{v}, \mathit{acc}, \mathit{obs})]}{f
	a}{(v', \mathit{acc'})}$ \\\qquad\qquad $\land$
	\CVar{x'} = \CConstructor{VWA} \CVar{\tau}
	\CVar{xs}[\CFunName{unat} \CVar{s} $\coloneqq$ \CVar{v'}] \\
    \qquad\qquad$\land\ $\Wamap{v}
	\CLParen\CVar{f}, \CVar{acc'}, \CVar{x'}, \CVar{s}+1, \CVar{e}, \CVar{obs}\CRParen
	\CVar{y}\CRParen \\\qquad $\land$
	\CLParen\CFunName{unat} \CVar{s} $\geq$
	\CFunName{List.length} \CVar{xs} $\lor$ \CVar{s} $\geq$ \CVar{e}
	$\longrightarrow$ \CVar{y} = (\CVar{x}, \CVar{acc})\CRParen
\end{tabular}
		\caption{C abstractions for the \textit{value} semantics.}
		\label{fig:vembed}
	\end{subfigure}
	\caption{\Cogent-compatible abstractions of C operations on arrays}
	\label{fig:embedding}
\end{figure*}

We extend the definition of \Cogent values with a new constructor for arrays.
In the update semantics, arrays take the form $\text{\CConstructor{UWA}}\ \tau\ \mathit{len}\ \mathit{p}$, where
$\tau$ is the type of the array elements, $\mathit{len}$ is the length of the array, and
$\mathit{p}$ is a pointer to the first element of the array. This is quite similar to 
the representation used in C, where a struct containing the length and a pointer is used (see~\Cref{fig:impl}). 
We additionally store the type in the \Cogent value so that the same form of \Cogent value can be 
used for all the different structs generated by the \Cogent compiler from the C template.
In the value semantics, however, arrays take the form $\text{\CConstructor{VWA}}\ \tau\ \mathit{xs}$, where $\mathit{xs}$ is simply 
an Isabelle/HOL list of Cogent values, similar to the representation used in the shallow embedding (see~\Cref{fig:spec}).

The various array function
abstractions supplied to \Cogent are defined as input-output relations that are later interpreted as functions, as shown in
\Cref{fig:embedding}. These abstractions are derived directly from the axiomatisation
used in the verification of \bilby.
In the update semantics (\Cref{fig:uembed}), the input is a pair of the \Cogent store before the array operation and the argument(s) to the
function, and the output is a pair of the store after the operation is complete and the return value.

Note that the definitions for \wafold and \wamap are recursive in two ways. The most obvious is the direct structural recursion on the array indices,
which is straightforward for Isabelle to show terminates. The other form of recursion is the mutual recursion with 
the \Cogent semantics (note that \wafold and \wamap invoke the \Cogent semantics to evaluate the argument function).
This is because the \Cogent semantics is itself parameterised 
by an environment ($\xi_u$ and $\xi_v$) containing the abstractions we are currently defining.
We cannot define these embeddings and the \Cogent semantics simultaneously as is normally done for mutual definitions, however, since our abstractions are
defined by users later in the process after the semantics of \Cogent have already been defined. Thus, we cannot close the recursion before
users have provided these definitions.
Therefore we must impose an additional constraint\footnote{This constraint is not a proof obligation, but merely a requirement for our framework to work automatically.} that an $n$-order abstract function can only
call a function of an order less than $n$.
This forces us to prove refinement for all functions of order $n-1$ or
less, before we can prove refinement for an abstract function of order $n$. By imposing this ordering, we can iteratively 
build up these environments in a staged way.
This constraint is not burdensome, and is already satisfied by all \Cogent codebases (see Rizkallah et al.~\citep{Rizkallah_LNSCOMKK_16}). 

Since \Cogent supports polymorphism, we in fact need to provide two abstractions for
our operations in the value semantics corresponding to the two value semantics layers
in our refinement chain, monomorphic and polymorphic. Because our abstractions (and even our implementation)
are entirely parametric in the element type, the abstractions
that are shown in \Cref{fig:vembed} can be used for both the monomorphic
and the polymorphic layers, which, as we will see, trivialises the refinement proof for the monomorphisation 
layer of the hierarchy.

\subsubsection{Value Typing}
Unlike the approach for native \Cogent values, where the value
typing relations are defined as erasures of the value-update refinement relation,
we shall define individual value-typing relations for arrays in the two semantics,
and then combine them into a refinement relation in~\Cref{def:uvvalrel}

We define these typing relations with two equations, one for the 
writable arrays $\mathtt{Array}\ \tau$, and one for read-only arrays $(\mathtt{Array}\ \tau)!$. 
In the value semantics, these two types are identical, merely requiring that the list elements are 
well-typed:
\begin{definition}[Array: Value Semantics Value Typing]\label{def:vvaltype}
\[\begin{array}{lcl}
	\VTRV{\mathtt{VWA}\ \tau\ \mathit{xs}}{\mathtt{Array}\ \tau}
	\!\!\!\!&\equiv&\!\!\!\!(\forall i < \mathsf{List.length}\ \mathit{xs}.\ \VTRV{\mathit{xs}\ !\ i}{\tau})\\
	\VTRV{\mathtt{VWA}\ \tau\ \mathit{xs}}{(\mathtt{Array}\ \tau)!}
	\!\!\!\!&\equiv&\!\!\!\!(\VTRV{(\mathtt{VWA}\ \tau\ \mathit{xs})}{\mathtt{Array}\ \tau})
\end{array}\]
\end{definition}

For the update semantics, we define an auxiliary predicate $\mathit{okay}(\mathtt{UWA}\ \tau\ \mathit{len}\ p)$ which states that 
each of the values in the array (located at successive pointers starting at $p$) is well typed, as well as a necessary condition on $\mathit{len}$ to ensure that 
our pointer arithmetic will not result in overflow. This predicate is used in the typing relation for both the 
writable and read-only array types. The heap footprint $[ r \ast w ]$ must consist of not just the pointer $p$ but all of the 
successive pointers to each array element, because all of these memory locations are contained in the array, and ownership of all of them 
is passed along with the array. Because the array contains only unboxed types, we know that there are no other pointers in the heap footprint.
For the read-only array type, $r$ contains the heap footprint and $w$ is empty, and vice versa for the writable array type.
\begin{definition}[Array: Update Semantics Value Typing]\label{def:uvaltype}
\[\begin{array}{l}
	\mathit{okay}\ (\mathtt{UWA}\ \tau\ \mathit{len}\ p) \equiv (\mathsf{unat}\ \mathit{len}\times \mathsf{size}\ \tau\leq\mathtt{max\_word})\\
	\land\ (\forall i < \mathit{len}.\ \exists v.\
		\mu\ (p+\mathsf{size}\ \tau\times i)= v\ \land\ 
		\VTRU{v}{\mu}{\tau}{\emptyset}{\emptyset})\\[0.25em]
	\VTRU{\mathtt{UWA}\ \tau\ \mathit{len}\
	p\ }{\mu}{(\mathtt{Array}\ \tau)!}{r}{w} \equiv \mathit{okay}\ (\mathtt{UWA}\ \tau\ \mathit{len}\ p) \\
	\qquad\qquad\qquad\quad\quad\land\ ( r = 
	\sset{p+i\ |\ \forall i.\ i<\mathit{len}} \land w = \emptyset)\\[0.1em]
	\VTRU{\mathtt{UWA}\ \tau\ \mathit{len}\
	p\ }{\mu}{(\mathtt{Array}\ \tau)}{r}{w} \equiv \mathit{okay}\ (\mathtt{UWA}\ \tau\ \mathit{len}\ p) \\
	\qquad\qquad\qquad\qquad\land\ (w = 
	\sset{p+i\ |\ \forall i.\ i<\mathit{len}} \land r = \emptyset)
\end{array}\]
where $\mathtt{max\_word}$ is $2^{32} - 1$ (as we are using 32-bit pointers).
\end{definition}
\noindent Recall that these the value typing relations for abstract types must
satisfy the constraints of \Cref{eq:valtype}. Because arrays
only have elements which are of unboxed types, these constraints are 
trivial to discharge.

\subsubsection{Refinement Relations}
\label{sec:refrels}
Just as \Cogent's typing relations and semantics are extended by our rules for arrays,
so too are the various refinement relations in each layer of the semantics.
Because, as previously mentioned, the C implementation of our arrays bears a strong resemblance 
to our update semantics values, and our value semantics values bear a strong resemblance to our 
Isabelle/HOL list representation, the value relations \valrelupdC and \shallowrel
for arrays are very simple.
 In the former, the two a related
if the length values and the pointer values are equal. 
In the latter, the two are related if the length of the lists
are the same and the elements are pairwise related.
\begin{definition}[\texttt{U32} Array: Update $\Rightarrow$ C Value Relation]\label{def:cvalrel}
	\[		\valrelupdC(x_\cLevel,{\mathtt{UWA}\ \mathtt{U32}\ \mathit{len}_\updateLevel\ p_\updateLevel}) \equiv (\mathit{len}_\updateLevel = \mathtt{len_\cLevel}\ x_\cLevel\ \land\ p_\updateLevel = \mathtt{arr_\cLevel}\ x_\cLevel)
	\]
where $\mathtt{len_\cLevel}$ and $\mathtt{arr_\cLevel}$ are the struct projections generated by 
AutoCorres for the array type.
\end{definition}

\begin{definition}[Array: Shallow $\Rightarrow$ Polymorphic Relation]\label{def:svalrel}
	\[\begin{array}{l}\shallowrel(x_\shallowLevel,\mathtt{VWA}\ \tau\ \mathit{xs}_\polyLevel)
	\equiv (\mathsf{length}\ \mathit{xs}_\polyLevel = \mathsf{length}\ x_\shallowLevel)\\\qquad \land\ (\forall
		i < \mathsf{length}\ \mathit{xs}_\polyLevel.\ \shallowrel(x_\shallowLevel\ !\ i,\mathit{xs}_\polyLevel\ !\
		i))
	\end{array}\]
\end{definition}

\noindent Because we use the same abstractions for both monomorphic and polymorphic layers,
the refinement relation \monorel is just equality and its refinement 
theorem is trivial. 

Lastly, it remains to define the value relation for arrays between the two \Cogent semantics, update and value.
As previously mentioned, we make use of the two value typing relations here with additional conditions to pairwise relate the corresponding elements 
of the two values:
\begin{definition}[Array: Value (Mono) $\Rightarrow$ Update Relation]
	\label{def:uvvalrel}
	\[\begin{array}{l}
		\VTRN{\mathtt{UWA}\ t\ \mathit{len}_\updateLevel\ p_\updateLevel\ }{\ \mu}{\mathtt{VWA}\ t\
		\mathit{xs}_\monoLevel}{\tau}{r}{w} 
		\equiv\\\quad\quad (\mathsf{unat}\ \mathit{len}_\updateLevel = \mathsf{length}\ \mathit{xs}_\monoLevel)\
		\\
		\quad \land\ (\forall i < \mathit{len}_\updateLevel.\ \exists \mathit{v}_\updateLevel.\
		\mu(p_\updateLevel+(\mathsf{size}\ t)\times i) = \mathit{v}_\updateLevel\\\qquad\qquad \land\ \VTRN{\mathit{v}_\updateLevel\ }{\ \mu}{(\mathit{xs}_\monoLevel\
		!\ \mathsf{unat}\ i)}{t}{\emptyset}{\emptyset})\\\quad \land\ 
		(\VTRV{\mathtt{VWA}\ t\ \mathit{xs}_\monoLevel}{\tau})\ \land\
		(\VTRU{\mathtt{UWA}\ t\
		\mathit{len}_\updateLevel\ p_\updateLevel\ }{\ \mu}{\tau}{r}{w})
	\end{array}\]
\end{definition}
\noindent Note the heap footprints of elements are always empty, as the array can only contain unboxed values.
\subsubsection{Refinement}
Now that we have all our abstractions, value typing and refinement relations, we have all the
ingredients we need to prove refinement for our array operations.

The theorems structurally resemble the refinement theorems presented in~\Cref{ssec:cogent-verification}.
For first order functions \walen, \waget and \waput , the
proofs tend to
follow easily from the definition of their abstractions, implementation, refinement relation 
and value typing relation --- the creativity is largely in the definitions, not the proofs. 
This is because we want the proofs to be easily automatable in future. Nonetheless
 we shall sketch the proofs for our \waput operation, specifically for arrays of \texttt{U32}, as an illustrative example.

We first show that the update semantics abstraction is
refined by the embedding of the C implementation that is automatically
generated by AutoCorres.  This appears similar to \Cref{thm:corres}, but instead of invoking the \Cogent update semantics we instead appeal to our abstract function from the environment
$\xi_\updateLevel(\mathsf{put}_\texttt{U32})$:
\begin{theorem}[Verifying \waput: Update $\Rightarrow$ C refinement]\label{thm:cput}\ \\
	Where the C embedding of $\mathsf{put}_\mathsf{U32}$ is $\mathit{put}_\cLevel$:
	\[\begin{array}{ll}
	\valrelupdC(a_\cLevel,a_\updateLevel) \land \heaprelupdC(\sigma,\mu)
		\land \CSem{a_\cLevel}{\sigma}{\mathit{put}_\cLevel}{r_\cLevel}{\sigma'} \longrightarrow\\
		\qquad \exists\ \mu'\ r_\updateLevel.\  \xi_u(\mathsf{put}_\texttt{U32})(\mu,a_\updateLevel) = (\mu',r_\updateLevel) \\ 
		\qquad\qquad \land\ \valrelupdC(r_\cLevel,r_\updateLevel) \land \heaprelupdC(\sigma',\mu') 
	\end{array}\]
\end{theorem}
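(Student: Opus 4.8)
The plan is to prove this forward-simulation statement by symbolically executing the AutoCorres embedding of \waput\ (\Cref{fig:impl}) and matching it step-for-step against the deterministic update-semantics abstraction $\Waput{u}$ (\Cref{fig:uembed}), whose control flow mirrors the C code exactly; the shape of the argument follows \Cref{thm:corres}. Since the hypothesis $\CSem{a_\cLevel}{\sigma}{\mathit{put}_\cLevel}{r_\cLevel}{\sigma'}$ already supplies a successful, failure-free execution, there is no undefined behaviour to rule out; I only need to read off the concrete result $r_\cLevel$ and output heap $\sigma'$ from that execution. For the existentials I take $(\mu', r_\updateLevel) := \xi_u(\mathsf{put}_\texttt{U32})(\mu, a_\updateLevel)$, which makes the first conjunct hold by reflexivity and leaves the two correspondence relations, \valrelupdC and \heaprelupdC, to re-establish by unfolding $\Waput{u}$.

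\textbf{Extracting the correspondence.} First I would unfold \valrelupdC and \heaprelupdC on the argument. The argument has Cogent type $(\mathtt{Array}\ \mathtt{U32}, \mathtt{U32}, \mathtt{U32})$, so the value relation decomposes into per-component relations: the C index and value are identified directly with their update-semantics counterparts, and for the boxed array component \Cref{def:cvalrel} together with \heaprelupdC identifies the update value $\mathtt{UWA}\ \mathtt{U32}\ \mathit{len}\ p$ (obtained as $\mu\ x$) with the C struct reachable from $a_\cLevel$, matching $\mathit{len}$ and $p$ to its \CVar{len} and \CVar{vals} fields. This supplies the witnesses $\tau = \mathtt{U32}$, $\mathit{len}$, $p$ needed to unfold $\Waput{u}$, and equates the C bounds test against \CVar{arr}\ensuremath{\rightarrow}\CVar{len} with the abstraction's test $i < \mathit{len}$.

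\textbf{Case split.} I would then case-split on $i < \mathit{len}$, aligning the \texttt{if} of \waput\ with the two implications defining $\Waput{u}$. In the out-of-bounds branch the C function performs no write and returns \CVar{arr}, while the abstraction sets $\mu' = \mu$ and $r_\updateLevel = x$; both relations then survive immediately from the hypotheses. In the in-bounds branch the C code executes the element assignment at \CVar{arr}\ensuremath{\rightarrow}\CVar{vals}[\CVar{i}], and I must show the resulting $\sigma'$ still relates to the abstraction's store $\mu(p + \mathsf{size}\ \tau \times i \mapsto v)$. The key is a single-cell lemma: the AutoCorres model of this write affects exactly the heap cell addressed by the pointer arithmetic $p + \mathsf{size}\ \mathtt{U32} \times i$ and no other, so every clause of \heaprelupdC\ for the untouched cells, for the \CVar{len} and \CVar{vals} fields, and for the remaining array elements is preserved, while the written cell relates because $v_\cLevel = v_\updateLevel$.

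\textbf{Main obstacle.} The difficulty is concentrated in this in-bounds heap step: reconciling AutoCorres's typed-heap view of the C array access with the flat pointer arithmetic of the update store, and showing the write is both well-defined and isolated. Concretely I must argue that computing the element address does not overflow and that the write leaves all other related objects intact—facts that echo the $\mathit{okay}$ conditions of \Cref{def:uvaltype}—and I expect to obtain these from the well-formedness built into \heaprelupdC\ rather than from any separate typing hypothesis, since the theorem assumes none. Once the heap relation is re-established, the return-value relation is immediate: \waput\ returns the same array object with its \CVar{len} and \CVar{vals} fields unchanged, so \Cref{def:cvalrel} holds verbatim for $r_\cLevel$ and $r_\updateLevel = x$.
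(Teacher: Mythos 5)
Your proposal is correct and follows essentially the same route as the paper's proof: a case split on whether the index is in bounds, with the out-of-bounds case trivial and the in-bounds case arguing that writing the (related, hence equal) value to the corresponding store location preserves \heaprelupdC, while \valrelupdC on the result is immediate because the destructive update leaves the array's location unchanged. Your additional remarks about the single-cell isolation of the write and address-arithmetic well-formedness are a finer-grained account of what the paper compresses into ``writing the same value to the same index within bounds only writes corresponding values to the same store locations,'' but they do not change the structure of the argument.
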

\begin{proof}
Recall that the argument to \waput for arrays of 32-bit words is a tuple that
contains an index, the array, and a 32-bit word to write to it.
We take cases on the index. 
In the case that the index is out of bounds, both the abstraction
(\autoref{fig:uembed}) and the implementation (\Cref{fig:impl}) return the array
unmodified with the store unmodified as well, and so the theorem is trivial.
In the case where the index is within bounds, our value relation \valrelupdC on the arguments 
implies that the two argument words and indices are the same. 
Writing the same value to the same index within bounds only writes corresponding 
values to the same store locations,
so it follows that our heap relation \heaprelupdC is preserved.
As it is destructively updated, the actual location of the array in memory is not changed, so 
the relation \valrelupdC is trivially preserved to the output array.
\end{proof}
\noindent For the next level up in the refinement hierarchy, we must show refinement from 
our value semantics abstraction (\Cref{fig:vembed}) to our update semantics abstraction (\Cref{fig:uembed}).
Even though this refinement step is below monomorphisation in our hierarchy, our
abstractions for \Waput{} are agnostic to the element type of
the array, so we can generalise the proof to arrays of any
element type. The theorem resembles~\Cref{thm:updvalrefinement}, but with 
the \Cogent semantics replaced with our supplied abstractions in $\xi_\mathsf{v}$ and $\xi_\mathsf{u}$.
\begin{theorem}[Verifying \waput: Value $\Rightarrow$ Update refinement]
\label{thm:uvput} For an element type $t$,
if $\ \VTRN{a_\updateLevel}{\mu}{a_\monoLevel}{(\mathtt{Array}\
	t,\mathtt{U32}, t)}{r}{w}$ and
	$\xi_\updateLevel(\mathsf{put}_t)(\mu,a_\updateLevel)=(\mu',v_\updateLevel)$, then there exists a value $v_\monoLevel$ and pointer sets $r' \subseteq r$ and  $w'$ such that $\xi_\mathsf{v}(\mathsf{put}_t)(a_\monoLevel) = v_\monoLevel$, and $\ \VTRN{v_\updateLevel}{\mu'}{v_\monoLevel}{\mathtt{Array}\ t}{r'}{w'}$ and $\Frame{w}{\mu}{w'}{\mu'}$.	
\end{theorem}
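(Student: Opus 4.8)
The plan is to unfold the two abstractions $\Waput{u}$ and $\Waput{v}$ (\Cref{fig:uembed,fig:vembed}) together with the combined array relation (\Cref{def:uvvalrel}) and to finish by a case split on whether the write index is in bounds. First I would decompose the hypothesis $\VTRN{a_\updateLevel}{\mu}{a_\monoLevel}{(\mathtt{Array}\ t,\mathtt{U32},t)}{r}{w}$ along its product type. This yields $a_\updateLevel = (x_\updateLevel, i, e_\updateLevel)$ and $a_\monoLevel = (x_\monoLevel, i, e_\monoLevel)$ where the array components are related by \Cref{def:uvvalrel} --- so $x_\updateLevel = \mathtt{UWA}\ t\ \mathit{len}\ p$ and $x_\monoLevel = \mathtt{VWA}\ t\ \mathit{xs}$ with $\mathsf{unat}\ \mathit{len} = \mathsf{length}\ \mathit{xs}$ and the slotwise correspondence between $\mu$ and $\mathit{xs}$ --- the two $\mathtt{U32}$ indices are equal, and the element values $e_\updateLevel, e_\monoLevel$ are related at $t$ with empty footprints. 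Since the array is the only linear component and both $\mathtt{U32}$ and the element type $t$ are unboxed, the tuple footprints collapse to $r = \emptyset$ and $w$, the set of the array's element pointers given by \Cref{def:uvaltype}.

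With the argument unpacked I would read off both semantics. The witness is $v_\monoLevel = \mathtt{VWA}\ t\ (\mathit{xs}[\mathsf{unat}\ i \coloneqq e_\monoLevel])$, which is exactly $\xi_\mathsf{v}(\mathsf{put}_t)(a_\monoLevel)$ by the definition of $\Waput{v}$, discharging the first conjunct; and I would take $r' = r = \emptyset$ and $w' = w$, as put neither reallocates nor changes the array's pointer or length. The hypothesis $\xi_\updateLevel(\mathsf{put}_t)(\mu, a_\updateLevel) = (\mu', v_\updateLevel)$ then gives, through $\Waput{u}$, that $v_\updateLevel = \mathtt{UWA}\ t\ \mathit{len}\ p$ and that either $\mu' = \mu(p + (\mathsf{size}\ t)\times i \mapsto e_\updateLevel)$ when $i < \mathit{len}$, or $\mu' = \mu$ when $i \geq \mathit{len}$. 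Because $\mathsf{unat}\ \mathit{len} = \mathsf{length}\ \mathit{xs}$ and unsigned word comparison agrees with comparison of $\mathsf{unat}$, this in/out-of-bounds split coincides across the two semantics; in the out-of-bounds case $\mathit{xs}[\mathsf{unat}\ i \coloneqq e_\monoLevel] = \mathit{xs}$ since list update is a no-op out of range, both stores and both values are unchanged, and every remaining obligation is immediate.

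The substantive case is $i < \mathit{len}$. For the frame relation $\Frame{w}{\mu}{w'}{\mu'}$, inertia holds because $\mu'$ differs from $\mu$ only at $p + (\mathsf{size}\ t)\times i \in w$, while leak freedom and fresh allocation are vacuous as $w' = w$. For the preserved relation $\VTRN{v_\updateLevel}{\mu'}{v_\monoLevel}{\mathtt{Array}\ t}{r'}{w'}$ I would check each conjunct of \Cref{def:uvvalrel}: length is preserved by list update, and the value- and update-semantics typings (\Cref{def:vvaltype,def:uvaltype}) hold because the only altered element --- at slot $i$, value-side index $\mathsf{unat}\ i$ --- becomes the related, well-typed pair $(e_\updateLevel, e_\monoLevel)$. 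The key step, and the main obstacle, is the slotwise pointer-to-list correspondence: I must show that writing $e_\updateLevel$ at the single address $p + (\mathsf{size}\ t)\times i$ mirrors the list update at index $\mathsf{unat}\ i$ while leaving every other slot in correspondence. This rests on (i) injectivity of $j \mapsto p + (\mathsf{size}\ t)\times j$ on $\{\, j \mid j < \mathit{len}\,\}$, which is precisely what the no-overflow clause of $\mathit{okay}$ in \Cref{def:uvaltype} secures, so distinct slots cannot alias and the untouched elements retain their heap contents; and (ii) that the per-element relation $\VTRN{\cdot}{\cdot}{\cdot}{t}{\emptyset}{\emptyset}$ does not depend on the written region, which holds because $t$ is unboxed and its values carry no pointers into $\mu$. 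Granting these, the written slot matches $(\mathit{xs}[\mathsf{unat}\ i \coloneqq e_\monoLevel])\ !\ \mathsf{unat}\ i = e_\monoLevel$ and each remaining slot $j$ matches the unchanged $\mathit{xs}\ !\ \mathsf{unat}\ j$, which closes the correspondence and the proof.
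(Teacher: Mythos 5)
Your proposal is correct and follows essentially the same route as the paper's proof: a case split on whether the index is in bounds, with the in-bounds case resting on the injectivity of the slot-to-address map secured by the no-overflow clause of $\mathit{okay}$ in \Cref{def:uvaltype}, and the frame conditions following directly from the definitions. You simply spell out in more detail the unpacking of the tuple relation and the per-conjunct check of \Cref{def:uvvalrel} that the paper's sketch leaves implicit.
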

\begin{proof}	
We also prove this by cases on
the index. In the case where the index is out of bounds, we trivially have
correspondence. In
the case where the index is within bounds, we prove that modifying the
element at the given index from the pointer on the store is equivalent to modifying the
element at the given index in the corresponding list. To prove this, we
need to show that there is a one-to-one mapping between store addresses of elements to
list indices. This is why we include in our typing relation that the array element addresses 
do not overflow the heap (\Cref{def:uvaltype}). 
Since this tells us the array cannot wrap around itself, each
element in the array has a unique address, giving us our mapping. We also
need to show that that the frame
conditions are satisfied, but because our implementation is memory safe, these follow easily from 
our definitions.
\end{proof}
\noindent Next, we must show refinement from the polymorphic layer to the monomorphic layer, but because \waput is a first-order function (neither taking functions as arguments nor returning them),
the value relations for its arguments and return values simplify to equality. Furthermore, as our \Cogent 
abstractions for monomorphic and polymorphic layers are identical, the proof of refinement is trivialised to
showing that identical functions will give equal results given equal inputs.

Monomorphisation thus easily dispatched, we must now make the final shift to the specification level (\Cref{fig:spec}).
We must prove a theorem analogous to~\Cref{thm:scorres}. Note that this theorem is also generic for any element type:
\begin{theorem}[Verifying \waput{}: Shallow $\Rightarrow$ Polymorphic Value]\label{thm:putscorres}
	The shallow embedding of \waput is called $\mathsf{put}_\shallowLevel$.
	Given arguments $a_\shallowLevel$ and $a_\polyLevel$, we have:
		\[\begin{array}{ll}
		\shallowrel(a_\polyLevel,a_\shallowLevel)
			\land \xi_\mathsf{v}(\mathsf{put})(a_\polyLevel) = r_\polyLevel \longrightarrow\\
			\qquad\qquad\qquad\;\exists r_\shallowLevel.\ \SSem{a_\shallowLevel}{\mathsf{put}_\shallowLevel}{r_\shallowLevel} \land \shallowrel(r_\polyLevel,r_\shallowLevel)
		\end{array}\]	
\end{theorem}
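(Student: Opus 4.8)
The plan is to treat this as a routine forward-simulation argument of exactly the shape of \Cref{thm:scorres}: unfold the two ``evaluation'' relations (here the defining relation of $\xi_\mathsf{v}(\mathsf{put})$ and plain function application for the shallow side), compute the results on both sides, and reassemble the output instance of \shallowrel from the input instance. Since \waput is first-order, there is no argument-function environment to reason about, so the whole proof reduces to showing how a shallow list update and a polymorphic $\CConstructor{VWA}$ update interact with \shallowrel. I would assume the generic statement and instantiate, noting that $\xi_\mathsf{v}(\mathsf{put})$ is precisely the relation \Waput{v} of \Cref{fig:vembed} and $\mathsf{put}_\shallowLevel$ the function \Waput{s} of \Cref{fig:spec}.

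First I would decompose the argument relation. The argument of \waput is a triple of an array, a \texttt{U32} index, and an element to insert, and \shallowrel on such a triple is the componentwise product relation. Writing $a_\polyLevel = (\CConstructor{VWA}\ \tau\ \mathit{xs}_\polyLevel,\, i,\, v_\polyLevel)$ and $a_\shallowLevel = (\mathit{xs}_\shallowLevel,\, i,\, v_\shallowLevel)$, the hypothesis $\shallowrel(a_\polyLevel, a_\shallowLevel)$ yields three facts: the indices coincide (the \texttt{U32} relation is literal equality, so the same $i$, hence the same $\mathsf{unat}\ i$, governs both updates); the array components satisfy \Cref{def:svalrel}, giving $\mathsf{length}\ \mathit{xs}_\polyLevel = \mathsf{length}\ \mathit{xs}_\shallowLevel$ together with pointwise relatedness of corresponding elements; and the inserted values satisfy $\shallowrel(v_\polyLevel, v_\shallowLevel)$ at the element type.

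Next I would compute both sides. Unfolding \Waput{v} forces $r_\polyLevel = \CConstructor{VWA}\ \tau\ (\mathit{xs}_\polyLevel[\mathsf{unat}\ i := v_\polyLevel])$, and unfolding \Waput{s} makes $r_\shallowLevel = \mathit{xs}_\shallowLevel[\mathsf{unat}\ i := v_\shallowLevel]$ the obvious witness. It then remains to re-establish \shallowrel between these two updated structures via \Cref{def:svalrel}. The length obligation is immediate, since list update preserves length (Isabelle's \texttt{length\_list\_update}) and the input lengths already agree. For the pointwise obligation I would case-split on each position $j$: when $j = \mathsf{unat}\ i$ the two lists carry $v_\polyLevel$ and $v_\shallowLevel$, related by hypothesis; when $j \ne \mathsf{unat}\ i$ the entry is untouched and relatedness is inherited from the input array relation. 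This is exactly what \texttt{nth\_list\_update} delivers. The only mild subtlety is the out-of-bounds index: Isabelle's \texttt{list\_update} leaves a list unchanged when the index is $\ge$ its length, but since both sides apply the \emph{same} $\mathsf{unat}\ i$ to lists of equal length, the two updates are either both in bounds or both no-ops, so no extra case analysis beyond the standard lemmas is required.

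I do not expect a genuine obstacle here: as the paper already observes, for first-order operations the creativity lies in choosing the abstractions and relations, not in the proofs. The one place demanding care is the interface between the \texttt{U32} index and the natural-number list index through $\mathsf{unat}$, together with the out-of-bounds behaviour of \texttt{list\_update}; but because the index relation is equality and lengths are preserved, these are dispatched uniformly by the standard \texttt{list\_update}/\texttt{nth}/length library lemmas, leaving essentially a mechanical unfold-and-reassemble proof.
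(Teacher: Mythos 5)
Your proposal is correct and takes essentially the same route as the paper: the paper's own proof is a one-line appeal to the definitions of \Waput{v}, \Waput{s}, and the value relation of \Cref{def:svalrel}, and your argument is simply a careful unfolding of exactly those definitions, with the componentwise decomposition of the argument relation and the standard list-update lemmas making explicit what the paper leaves implicit. The observation that both sides apply the same $\mathsf{unat}$-converted index to equal-length lists, so in-bounds and out-of-bounds cases are handled uniformly, is the only nontrivial detail and you have it right.
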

\begin{proof}
Follows from the definitions of \Waput{v} (in $\xi_\mathsf{v}$) and \Waput{s}, as well as the value relation 
from \Cref{def:svalrel}.
\end{proof}
\noindent The above theorems are all we need to compose the correctness of \waput with 
\Cogent's refinement hierarchy. For higher-order functions \wafold and \wamap the proofs are broadly similar, but 
slightly complicated by the presence of functions as arguments. Our theorems assume 
that type preservation and refinement hold for all of their argument functions --- an assumption that is discharged 
by the Cogent compiler (for Cogent functions) or by manual proofs (for C functions).
As always with looping functions, the majority of the proof effort was
concentrated on proving that loop invariants are maintained, and not on any aspect of the \Cogent framework.
\section{Generic Loops}
\label{sec:loops}

Aside from arrays and their associated iterators and operations, the most commonly used library functions in
\bilby are generic loop functions. These are used to write loops that do not simply iterate over a particular data structure,
and are often used to accommodate non-standard search patterns over data structures. For example, in~\Cref{sec:binsearch} we 
shall use such a function in a binary search. 

We shall verify the function \CFunName{repeat}, which is given the following type signature in Cogent:
\[\text{\CFunName{repeat} : \CLParen \CPrimType{U32},
			(\CTypeVar{a}, \CTypeVar{b}\CBang) \CFunctionArrow \CPrimType{Bool},  
			(\CTypeVar{a}, \CTypeVar{b}\CBang) \CFunctionArrow \CTypeVar{a},  
			\CTypeVar{a}, \CTypeVar{b}\CBang\CRParen
			\CFunctionArrow \CTypeVar{a}}\]
The expression $\text{\CFunName{repeat}}\ \mathit{n}\ \mathit{stop}\ \mathit{step}\ \mathit{acc}\ \mathit{obs}$ operates on some mutable state $\mathit{acc}$ (of linear type) and some observer data $\mathit{obs}$ (of read-only type), 
and runs the loop body $\mathit{step}$ on it at most $n$ times, or until $\mathit{stop}$ returns true.
\Cref{fig:loopspec} gives the Isabelle/HOL shallow embedding and \Cref{fig:loopimpl} gives the template C implementation. \Cref{fig:embeddingloop} gives the Cogent-compatible 
embeddings for the environments $\xi_{u}$ and $\xi_\mathsf{v}$. Note that these must invoke the Cogent semantics twice, once to evaluate each of the argument functions.
Discharging the required proof obligations connecting all of these embeddings and maintaining \Cogent's invariants is even more straightforward than for 
other higher-order functions such as \textsf{mapAccum}, as here we do not even need to consider custom data structures such as arrays. 
This means that we can even define a polymorphic abstraction of the C code compatible with AutoCorres, enabling us to prove the entire refinement chain polymorphically and thereby largely eliminate the boilerplate of multiple type instantiations.

\begin{figure}
	\begin{tabular}{l}
		\RepeatAtm{s} : \CPrimType{nat} \CFunctionArrow 
			\CLParen\CTypeVar{a} \CFunctionArrow \CTypeVar{b} \CFunctionArrow \CPrimType{bool}\CRParen \CFunctionArrow
			\CLParen\CTypeVar{a} \CFunctionArrow \CTypeVar{b} \CFunctionArrow \CTypeVar{a}\CRParen\\
			\qquad\quad\!\!\!\!\!
			\CFunctionArrow \CTypeVar{a} \CFunctionArrow \CTypeVar{b}
			\CFunctionArrow \CTypeVar{a}\\
		\RepeatAtm{s} 0 \_ \_ \CVar{acc} \_ = \CVar{acc}\\
		\RepeatAtm{s} \CLParen\CFunName{Suc} \CVar{n}\CRParen \CVar{f} \CVar{g} \CVar{acc} \CVar{obsv} = 
		 \CKwd{if} \CLParen\CVar{f} \CVar{acc} \CVar{obsv} \CRParen \\\quad \CKwd{then} \CVar{acc}
		 \CKwd{else} \RepeatAtm{s} \CVar{n} \CVar{f} \CVar{g} \CLParen\CVar{g} \CVar{acc} \CVar{obsv}\CRParen \CVar{obsv}
	\end{tabular}
	\caption{Shallow embedding of the generic loop}
	\label{fig:loopspec}
\end{figure}

\section{Composing Verification of \Cogent and C}
\label{sec:examples}
Now that we have demonstrated how to prove the obligations placed on C code, we shall illustrate 
how to integrate these proofs with proofs about \Cogent code.
Firstly, as a simple example, we return to the \texttt{sum} function presented in \Cref{ex:iterators}.

After compiling this program and linking it to our C library, the compiler produces a refinement theorem similar to 
\Cref{thm:everything}, however the phases that are generated per-program via translation validation such as the final refinement to C
leave open proof obligations for the user to discharge about abstract functions\footnote{At the time of writing, Cogent's shallow phase (\Cref{thm:scorres}) implicitly assumes abstract function correctness rather than doing so explicitly as done in other phases. So for now, we just copy and discharge these.}. In our sum example, \Cogent will generate obligations 
about \CFunName{length} and about \CFunName{fold}. The obligation about \CFunName{length} is exactly our C refinement theorem for \CFunName{length} (the \CFunName{length} analogue of \Cref{thm:cput}), and the obligation for \CFunName{fold} is an instance of our theorem for \CFunName{fold}: the obligation requires showing refinement under the assumption that the argument function is \CFunName{add}, whereas our theorem is generically proven for any function that maintains \Cogent's invariants and refinement.
Because \CFunName{add} is defined in \Cogent, \Cogent generates the required theorem for the argument function for us, allowing us to easily discharge this obligation.

We additionally must instantiate our sets of abstract values and types with arrays, and the environments $\xi_\textsf{v}$ and $\xi_\textsf{u}$ with our abstractions from \Cref{fig:embedding}. This instantiation 
requires us to additionally provide proofs similar to that of \Cref{thm:uvput} for each function, as well as proofs that all the type conditions from~\Cref{ssec:type-reqs}. These proofs ultimately connect our C code to
the generated shallow embedding, which strongly resembles the original \Cogent code:
$$
\begin{array}{l}
\mathsf{add}_\shallowLevel\ x\ y = x + y\\
\mathsf{sum}_\shallowLevel\ \mathit{xs} = \mathsf{fold}_\shallowLevel\ \mathsf{add}_\shallowLevel\ 0\ \mathit{xs}\ 0\ (\mathsf{length}_s\ \mathit{xs})\ ()\\
\end{array}
$$
With all of these proofs in place, we get a refinement theorem like~\Cref{thm:everything} for the function $\mathsf{sum}$, which leaves no function unverified. This refinement theorem 
allows us to prove properties of $\mathsf{sum}_\shallowLevel$ just by equational reasoning, and have these proofs also apply to the C implementation.
\begin{figure}
	
	\begin{tabular}{l}
		\typeparam{A} \CFunName{repeat}\CLParen \CPrimType{u32} \CVar{n},
			\CPrimType{fid} \CVar{f}, \CPrimType{fd} \CVar{g},
			\typeparam{A} \CVar{acc}, \typeparam{O} \CVar{obsv}\CRParen \CLBrace\\
		\quad \CKwd{for}\CLParen\CPrimType{u32} \CVar{i} = 0; \CVar{i} < \CVar{n}; \CVar{i}++\CRParen\\
			\qquad \CKwd{if} \CLParen\CFunName{\textcolor{red}{dispatch\_f}}\CLParen\CVar{f}, \CVar{acc}, \CVar{obsv}\CRParen\CRParen \CKwd{break};\\
			\qquad \CKwd{else} \CVar{acc} = \CFunName{\textcolor{red}{dispatch\_g}}\CLParen\CVar{g}, \CVar{acc}, \CVar{obsv}\CRParen;
		\;\;\\
		\quad \CKwd{return} \CVar{acc};\\\CRBrace
	\end{tabular}
	\caption{C implementation of the generic loop.}
	\label{fig:loopimpl}
\end{figure}
\subsection{Verifying C Parts of \bilby}
As mentioned, the previous verification of the functional correctness of key operations in \bilby~\cite{Amani_HCRCOBNLSTKMKH_16}
simply assumed the correctness of abstract functions, including an axiomatisation of array operations~\cite{Amani:phd}. 
We removed the axiomatisation for the five core array operations that we verified, 
and were able to show that the functional correctness proofs compose for the combined system. 
The array operations that remain unverified are functions like \texttt{create}, \texttt{set},
\texttt{copy}, \texttt{cmp}, that depend on platform-specific
functions such as \texttt{malloc}.
\begin{figure*}
	\begin{subfigure}[t]{0.475\textwidth}
		\small
		\begin{tabular}{l}
			\RepeatAtm{u} \CLParen\CVar{\mu_1}, (\CVar{n}, \CVar{f}, \CVar{g}, \CVar{acc}, \CVar{obs})\CRParen \CLParen\CVar{\mu_3}, \CVar{y}\CRParen =\\
				\quad \CLParen\CVar{n} > 0 $\longrightarrow$\\
					\qquad\quad$\exists$\CVar{b}.
						$\UpdSemAb{\xi_u}{[a \mapsto (\mathit{acc}, \mathit{obs})]}{f\ a\ }{\ \mu_1}{\text{\CVar{b}})\ }{\ \mu_1}$ $\land$\\
						\qquad\qquad\CLParen\CVar{b} $\longrightarrow$ \CVar{\mu_1} = \CVar{\mu_3} $\land$ \CVar{y} = \CVar{acc}\CRParen$\land$\\
						\qquad\qquad\CLParen$\lnot$\CVar{b} $\longrightarrow$ $\exists$\CVar{\mu_2} \CVar{acc'}.\\
							\qquad\qquad\qquad$\UpdSemAb{\xi_u}{[a \mapsto (\mathit{acc}, \mathit{obs})]}{g\ a\ }{\ \mu_1}{\text{\CVar{acc'}}\ }{\ \mu_2}$ $\land$\\
							\qquad\qquad\qquad\RepeatAtm{u} \CLParen\CVar{\mu_2}, (\CVar{n}-1, \CVar{f}, \CVar{g}, \CVar{acc'}, \CVar{obsv})\CRParen
								\CLParen\CVar{\mu_3}, \CVar{y}\CRParen\CRParen\CRParen$\land$\\
				\quad \CLParen\CVar{n} = 0 $\longrightarrow$
	\CVar{\mu_1} = \CVar{\mu_3} $\land$ \CVar{y} = \CVar{acc}\CRParen
		\end{tabular}
		\caption{C abstractions for the \textit{update} semantics.}
		\label{fig:uembedloop}
	\end{subfigure}\hfill
	\begin{subfigure}[t]{0.475\textwidth}
		\small
		\begin{tabular}{l}
			\RepeatAtm{v} \CLParen\CVar{n}, \CVar{f}, \CVar{g}, \CVar{acc}, \CVar{obs}\CRParen \CVar{y} =\\
				\quad\CLParen\CVar{n} > 0 $\longrightarrow$\\
					\qquad\quad$\exists$\CVar{b}.
						$\ValSemA{\xi_v}{[a \mapsto (\mathit{acc}, \mathit{obs})]}{f\ a\ }{\text{\CVar{b}})\ }$ $\land$\\
						\qquad\qquad\CLParen\CVar{b} $\longrightarrow$ \CVar{y} = \CVar{acc}\CRParen$\land$\\
						\qquad\qquad\CLParen$\lnot$\CVar{b} $\longrightarrow$ $\exists$\CVar{acc'}.\\
							\qquad\qquad\qquad$\ValSemA{\xi_v}{[a \mapsto (\mathit{acc}, \mathit{obs})]}{g\ a\ }{\text{\CVar{acc'}}\ }$ $\land$\\
							\qquad\qquad\qquad\RepeatAtm{v} \CLParen\CVar{n}-1, \CVar{f}, \CVar{g}, \CVar{acc'}, \CVar{obsv}\CRParen
								\CVar{y}\CRParen\CRParen$\land$\\
				\quad\CLParen\CVar{n} = 0 $\longrightarrow$ \CVar{y} = \CVar{acc}\CRParen
		\end{tabular}
		\caption{C abstractions for the \textit{value} semantics.}
		\label{fig:vembedloop}
	\end{subfigure}
	\caption{\Cogent-compatible abstraction for the generic loop.}
	\label{fig:embeddingloop}
\end{figure*}
\subsection{Binary Search}
\label{sec:binsearch}
\Cref{ex:binarySearch} gives the \Cogent code for binary search using our previously defined and verified 
array functions and $\mathsf{repeat}$. Note that while we specify the maximum number of iterations to $\mathsf{repeat}$ as the length of the array,
the algorithm is still $\mathcal{O}(\log n)$ as it will always exit early from the $\mathsf{stop}$ condition. 
This \emph{fuel} argument is an easy way to ensure that Isabelle is convinced that our functions terminate.

We firstly prove that the generated shallow embedding, which strongly resembles the \Cogent code, 
is correct:
\begin{figure}
	\small
	\begin{tabular}{l}
		\CKwd{type} \CTypeName{Range} = (\CPrimType{U32}, \CPrimType{U32}, \CPrimType{Bool})\\
		\CFunName{stop} : (\CTypeName{Range},
			((\CTypeName{Array} \CPrimType{U32})!, \CPrimType{U32})) \CFunctionArrow \CPrimType{Bool}\\
		\CFunName{stop} ((\CVar{l}, \CVar{r}, \CVar{b}), (\CVar{arr}, \CVar{v})) = $b \lor l \geq r$\\
		\CFunName{search} : (\CTypeName{Range},
			((\CTypeName{Array} \CPrimType{U32})!, \CPrimType{U32})) \CFunctionArrow
			\CTypeName{Range}\\
		\CFunName{search} ((\CVar{l}, \CVar{r}, \CVar{b}), (\CVar{arr}, \CVar{v})) =\\
		\quad \CKwd{let} \CVar{m} = \CVar{l} + (\CVar{r} - \CVar{l}) $\div$ 2 \CKwd{and}\\
		\qquad\ \CVar{x} = \CFunName{get} (\CVar{arr}, \CVar{m}, 0)\\
		\quad \CKwd{in}
		\CKwd{if}\; \!| \CVar{x} < \CVar{v} $\rightarrow$ (\CVar{m}+1, \CVar{r}, \CVar{b})\\
		\qquad\quad\, \!| \CVar{x} > \CVar{v} $\rightarrow$ (\CVar{l}, \CVar{m}, \CVar{b})\\
		\qquad\quad\, \!| \CKwd{else} $\rightarrow$ (\CVar{m}, \CVar{r}, \CKwd{True})\\

		\CFunName{binary\dash{}search} : ((\CTypeName{Array} \CPrimType{U32})!, \CPrimType{U32}) \CFunctionArrow \CPrimType{U32}\\
		\CFunName{binary\dash{}search} (\CVar{arr}, \CVar{v}) =\\
		\quad \CKwd{let} \CVar{len} = \CFunName{length} \CVar{arr} \CKwd{and}\\
		\qquad\ (\CVar{l}, \CVar{r}, \CVar{b}) = \CFunName{repeat} (\CVar{len},
			\CFunName{stop}, \CFunName{search}, (0, \CVar{len}, \CKwd{False}), (\CVar{arr}, \CVar{v}))\\
		\quad \CKwd{in} \CKwd{if}\;\! \CVar{b} \CKwd{then} \CVar{l} \CKwd{else} $\rightarrow$ \CVar{len}
	\end{tabular}
	\caption{The binary search algorithm in \Cogent.}
	\label{ex:binarySearch}
\end{figure}
\begin{theorem}[Correctness of Cogent binary search]\label{thm:binarySearchS}
	Let \CVar{i}= \CFunName{unat} \CLParen\CFunName{binary\dash{}search\textsubscript{s}} \CLParen\CVar{xs}, \CVar{v}\CRParen\CRParen in\\
	\begin{tabular}{l}
		\CFunName{sorted} \CVar{xs} $\land$ \CFunName{length}
		\CVar{xs} < $2^{32}$ $\longrightarrow$\\
		\quad \CLParen \CVar{i} < \CFunName{length} \CVar{xs} $\longrightarrow$ \CVar{xs} \CBang \CVar{i} = \CVar{v}\CRParen \\
		\quad $\land$ \CLParen$\lnot$ \CVar{i} < \CFunName{length} \CVar{xs} $\longrightarrow$ \CVar{v} $\notin$ \CFunName{set} \CVar{xs}\CRParen
	\end{tabular} 
	\end{theorem}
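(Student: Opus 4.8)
The plan is to reduce the statement to a single invariant-driven induction on the \emph{fuel} argument of $\mathsf{repeat}_s$, working entirely at the level of the shallow embedding. First I would unfold $\mathsf{binary\_search}_s$ (from \Cref{ex:binarySearch}) to expose the call $\mathsf{repeat}_s\ \mathit{len}\ \mathsf{stop}\ \mathsf{search}\ (0,\mathit{len},\mathsf{False})\ (\mathit{xs},v)$, where $\mathit{len} = \mathsf{length}_s\ \mathit{xs}$. Since $\mathsf{length}_s$ is $\mathsf{of\_nat}\circ\mathsf{List.length}$, the hypothesis $\mathsf{length}\ \mathit{xs} < 2^{32}$ gives $\mathsf{unat}\ \mathit{len} = \mathsf{length}\ \mathit{xs}$, a fact I will reuse throughout to discharge the word-arithmetic side conditions.

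The core is a loop invariant on an accumulator $(l,r,b)$ of type $\mathsf{Range}$:
\begin{align*}
\mathrm{INV}(l,r,b)\;\equiv\;& \mathsf{unat}\ l \le \mathsf{unat}\ r \ \land\ \mathsf{unat}\ r \le \mathsf{length}\ \mathit{xs}\\
&{}\land\ (b \longrightarrow \mathsf{unat}\ l < \mathsf{unat}\ r \ \land\ \mathit{xs}\,!\,\mathsf{unat}\ l = v)\\
&{}\land\ (\lnot b \longrightarrow (\forall j < \mathsf{unat}\ l.\ \mathit{xs}\,!\,j < v)\\
&\qquad\quad\ \land\ (\forall j.\ \mathsf{unat}\ r \le j \land j < \mathsf{length}\ \mathit{xs} \longrightarrow v < \mathit{xs}\,!\,j)).
\end{align*}
I would show this holds of the initial accumulator $(0,\mathit{len},\mathsf{False})$ (both exclusion ranges are empty) and is preserved by one application of $\mathsf{search}$ whenever $\mathsf{stop}$ is false, i.e.\ when $\lnot b \land l < r$. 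Preservation proceeds by cases on the comparison of $\mathit{xs}\,!\,\mathsf{unat}\ m$ with $v$, where $m = l + (r-l)\mathbin{\mathrm{div}}2$. Here I use \textbf{sortedness} to propagate the strict bound from the midpoint to the entire discarded half: in the $\mathit{xs}\,!\,\mathsf{unat}\ m < v$ case, monotonicity gives $\mathit{xs}\,!\,j \le \mathit{xs}\,!\,\mathsf{unat}\ m < v$ for all $j \le \mathsf{unat}\ m$, and symmetrically for the other case; the $\mathit{xs}\,!\,\mathsf{unat}\ m = v$ case sets $b$ and only needs $\mathit{xs}\,!\,\mathsf{unat}\ m = v$. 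The bound $\mathsf{length}\ \mathit{xs} < 2^{32}$ is what lets me conclude that $m$ and $m+1$ do not overflow and satisfy $\mathsf{unat}\ l \le \mathsf{unat}\ m < \mathsf{unat}\ r$.

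The one genuine difficulty is ruling out premature termination when the fuel is exhausted: the $n=0$ case of $\mathsf{repeat}_s$ (\Cref{fig:loopspec}) returns the accumulator regardless of whether the search range is empty, so $\mathrm{INV}$ alone does not yield $v \notin \mathsf{set}\ \mathit{xs}$. To close this gap I would \emph{strengthen} the induction to additionally carry the measure bound $\lnot b \longrightarrow \mathsf{unat}\ r - \mathsf{unat}\ l \le n$, and prove by induction on the fuel $n$ that the result $(l',r',b')$ satisfies $b' \longrightarrow (\mathsf{unat}\ l' < \mathsf{length}\ \mathit{xs} \land \mathit{xs}\,!\,\mathsf{unat}\ l' = v)$ and $\lnot b' \longrightarrow v \notin \mathsf{set}\ \mathit{xs}$. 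The crucial observation is that each $\mathsf{search}$ step with $\lnot b \land l < r$ \emph{strictly} shrinks $\mathsf{unat}\ r - \mathsf{unat}\ l$ (because $\mathsf{unat}\ l \le \mathsf{unat}\ m < \mathsf{unat}\ r$), so the measure bound is re-established for the recursive call unless the step sets $b$; when $n = 0$ the bound forces $\mathsf{unat}\ l = \mathsf{unat}\ r$, and together with the $\lnot b$ exclusion clauses of $\mathrm{INV}$ this rules out every index $j < \mathsf{length}\ \mathit{xs}$, giving $v \notin \mathsf{set}\ \mathit{xs}$. The identical conclusion follows when $\mathsf{stop}$ fires via $l \ge r$.

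Finally I would instantiate this lemma at the initial accumulator with $n = \mathit{len}$, where the measure bound holds since $\mathsf{unat}\ \mathit{len} - 0 = \mathsf{length}\ \mathit{xs}$. A concluding case split on the returned flag $b$ finishes the proof: if $b$ holds then $i = \mathsf{unat}\ l < \mathsf{length}\ \mathit{xs}$ and $\mathit{xs}\,!\,i = v$; if $\lnot b$ then $i = \mathsf{unat}\ \mathit{len} = \mathsf{length}\ \mathit{xs}$, so $\lnot(i < \mathsf{length}\ \mathit{xs})$ and the second conjunct $v \notin \mathsf{set}\ \mathit{xs}$ discharges the goal. I expect the interaction between the fuel and the strictly-decreasing measure in the strengthened induction, together with the attendant $\mathsf{unat}$ overflow bookkeeping, to be the main obstacle; the invariant-preservation cases themselves are routine once sortedness is in hand.
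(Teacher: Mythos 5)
The paper states \Cref{thm:binarySearchS} without printing a proof (the argument lives in the accompanying Isabelle development), so there is no textual proof to compare against; judged on its own, your proposal is correct and is essentially the argument one would expect. The invariant (sortedness-propagated exclusion zones outside $[l,r)$, plus the found-flag clause), the strengthened induction carrying $\lnot b \longrightarrow \mathsf{unat}\ r - \mathsf{unat}\ l \le n$ to rule out premature fuel exhaustion, and the use of $\mathsf{length}\ \mathit{xs} < 2^{32}$ to keep $\mathsf{unat}$ faithful and the midpoint arithmetic overflow-free are exactly the right ingredients, and your final case split on the returned flag correctly discharges both conjuncts of the theorem.
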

\noindent where \CFunName{length}, \CFunName{set}, and \CFunName{sorted} are Isabelle's list library functions that return the length of a list, turn a list to a set, and check whether a list is sorted, respectively. 
We denote search failure by returning an index that is out of bounds. 

From our overall refinement theorem (\Cref{thm:everything}), we can easily conclude that the C implementation is correct and remove any reference to Cogent:
\begin{corollary}[Correctness of the C binary search]\label{thm:binarySearchC}
	$ $\\
	Let \CVar{xs} be the list abstraction of the array \CVar{arr} for C heap $\sigma$,
	\CFunName{valid} \CLParen$\sigma$, \CVar{arr}\CRParen be the predicate that states that the array is valid,
	\ie the array's size (in bytes) is less than the size of memory ($2^{32}$ bytes) and is well-formed,
	and \CFunName{same} \CLParen$\sigma$, $\sigma'$, \CVar{arr}\CRParen be the predicate that states that an array is the same for the given C heaps:\\
	\begin{tabular}{l}
		\CFunName{sorted} \CVar{xs} $\land$
			\CFunName{valid} \CLParen$\sigma$, \CVar{arr}\CRParen $\land$\\
			$\CSem{(\text{\CVar{arr}}, \text{\CVar{v}})}{\sigma}{\text{\CFunName{binary\dash{}search\textsubscript{c}}}}{i}{\sigma'}$ $\longrightarrow$\\
		\quad \CFunName{valid} \CLParen$\sigma'$, \CVar{arr}\CRParen $\land$
			\CFunName{same} \CLParen$\sigma$, $\sigma'$, \CVar{arr}\CRParen $\land$\\
		\quad \CLParen\CFunName{unat} \CVar{i} < \CFunName{length} \CVar{xs} $\longrightarrow$ 
			\CVar{xs} \CBang \CLParen\CFunName{unat} \CVar{i}\CRParen = \CVar{v}\CRParen
		$\land$\\
		\quad \CLParen$\lnot$ \CFunName{unat} \CVar{i} < \CFunName{length} \CVar{xs} $\longrightarrow$ \CVar{v} $\notin$ \CFunName{set} \CVar{xs}\CRParen
	\end{tabular}
\end{corollary}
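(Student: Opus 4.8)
The plan is to obtain this corollary purely by composing the overall refinement theorem (\Cref{thm:everything}), instantiated at the $\mathsf{binary\text{-}search}$ function, with the shallow correctness result (\Cref{thm:binarySearchS}); no new reasoning about the imperative loop is needed. As a preliminary, I would instantiate the abstract value and type universes with arrays and the environments $\xi_\mathsf{v}$ and $\xi_\mathsf{u}$ with the array abstractions of \Cref{fig:embedding} and the $\mathsf{repeat}$ abstractions of \Cref{fig:embeddingloop}, and discharge the per-function obligations for \CFunName{length}, \CFunName{get} and $\mathsf{repeat}$ exactly as in the $\mathsf{sum}$ example. This makes \Cref{thm:everything} available for $\mathsf{binary\text{-}search}$.

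The first real step is to manufacture the input side of the combined relation $\refrel{black}{R}{\cLevel}{\shallowLevel}$ from the concrete hypotheses $\mathsf{sorted}\ \mathit{xs}$ (unused here) and $\mathsf{valid}\,(\sigma, \mathit{arr})$, together with the assumption that $\mathit{xs}$ is the list abstraction of $\mathit{arr}$. Concretely, I would reflect the valid C array into a $\mathtt{UWA}$ update value and a $\mathtt{VWA}$ value-semantics value over the list $\mathit{xs}$, and take the shallow argument to be $(\mathit{xs}, v)$. Validity yields the $\mathit{okay}$ side-condition of \Cref{def:uvaltype} (in particular the non-overflow bound), and because the array is passed at the read-only type $(\mathtt{Array}\ \mathtt{U32})!$ its writable footprint $w$ is empty. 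The layered relations then hold by construction: \valrelupdC by \Cref{def:cvalrel}, the value--update relation by \Cref{def:uvvalrel}, $\monorel$ by equality, and \shallowrel by \Cref{def:svalrel}.

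Feeding the C evaluation hypothesis into \Cref{thm:everything} then yields corresponding evaluations at every layer, in particular a shallow evaluation $\mathsf{binary\text{-}search}_\shallowLevel\ (\mathit{xs}, v) = r_\shallowLevel$, output witnesses related by $\refrel{black}{R}{\cLevel}{\shallowLevel}$, and a frame fact $\Frame{w}{\mu}{w'}{\mu'}$. Since the argument's writable set is empty and the result is a plain $\mathtt{U32}$ (so $w' = \emptyset$), the inertia clause applies to \emph{every} pointer, giving $\mu = \mu'$; transporting this through $\heaprelupdC(\sigma, \mu)$ and $\heaprelupdC(\sigma', \mu')$ yields $\mathsf{same}\,(\sigma, \sigma', \mathit{arr})$, and hence $\mathsf{valid}\,(\sigma', \mathit{arr})$. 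For the index conditions I would apply \Cref{thm:binarySearchS} to $r_\shallowLevel$, discharging its hypotheses from $\mathsf{sorted}\ \mathit{xs}$ and the $\mathit{len} < 2^{32}$ bound implied by validity. Finally, since the return-value relations collapse to word equality on $\mathtt{U32}$ (via \valrelupdC and \shallowrel), the C result $i$ equals $r_\shallowLevel$, so $\mathsf{unat}\ i = \mathsf{unat}\ r_\shallowLevel$ and the two membership conclusions transfer verbatim.

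The main obstacle is the reflection step: building the well-typed $\mathtt{UWA}$ and $\mathtt{VWA}$ witnesses from the abstract hypothesis that $\mathit{xs}$ is the list abstraction of $\mathit{arr}$, and proving that all four layered relations hold \emph{simultaneously} for these witnesses. This is where the definitions of \Cref{sec:refrels} and the value-typing judgements of \Cref{def:vvaltype,def:uvaltype} do the heavy lifting, and it is again where the non-overflow bound encoded in $\mathit{okay}$ becomes essential, just as in the proof of \Cref{thm:uvput}. Everything downstream is bookkeeping: the loop itself is handled once and for all inside \Cref{thm:everything}, and the algorithmic content lives entirely in \Cref{thm:binarySearchS}.
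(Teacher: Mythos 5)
Your proposal is correct and follows essentially the same route as the paper: the corollary is obtained by instantiating the environments with the array and \CFunName{repeat} abstractions, applying the overall refinement theorem (\Cref{thm:everything}) to \CFunName{binary\dash{}search}, and composing with the shallow correctness result (\Cref{thm:binarySearchS}). The extra detail you supply (building the combined input relation from \CFunName{valid}, using the frame relation with an empty writable footprint to get \CFunName{same}, and collapsing the value relations to word equality on the \CPrimType{U32} result) is exactly the bookkeeping the paper leaves implicit.
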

\noindent This theorem depends on no additional assumptions about any functions or any other part of the Cogent framework. With this, we have shown that the Cogent framework enables the verification of combined \Cogent-C systems.

\section{Related Work}
\paragraph*{Compiler Correctness}

Patterson and Ahmed~\cite{PattersonAhmed19} have defined a spectrum of compiler verification theorems focusing on compositional compiler correctness, extensible through linking. \cogent's certifying compiler does not neatly fall on this spectrum as the compiler itself does the linking of the \cogent-C system, and we are linking with manually verified C rather than other compiler outputs.

Like \cogent, the Cito  language~\cite{WangCC14} allows cross-language linking and combined verification in a proof assistant. Unlike \cogent, Cito is a low-level C-like language without a sophisticated type system, so there are no FFI requirements to enforce static guarantees.
Pit-Claudel et al.~\cite{ijcar} use Cito as a target for verified compilation of relational queries, and support linking with foreign assembly code, but do not compose verification across languages for refinement from a common high-level spec, as we do.

An important part of verified compilation is how the semantics
of linking is defined for the source language. The simplest way to
define linking on the source language is to require that 
a program may only be linked if it refines a program definable in the
source language. 
SepCompCert~\cite{Kang_POPL_2016} and Pilsner~\cite{Neis_ICFP_2015} take
this approach. This approach is suitable for
languages which, unlike \Cogent, are expressive enough to encompass all programs, 
but we require C functions to refine more expressive specifications than what is definable natively in \Cogent. 
The \Cogent compiler instead generates shallow embeddings of \cogent programs in
Isabelle/HOL, that enable linking at the Isabelle specification level. Hence,
\Cogent supports source-independent linking insofar as we can embed all source languages 
inside Isabelle/HOL.
\paragraph*{Verification Approach}
%

Arrays are widely used data structures, particularly in low level systems programming. 
AutoCorres~\cite{Greenaway_AK_12,Greenaway_LAK_14} simplifies the verification of C code by abstracting it into a monadic embedding of C in Isabelle/HOL. The \cogent compiler further simplifies reasoning about the Cogent parts of a system by abstracting AutoCorres' monadic C into a purely functional Cogent embedding in Isabelle/HOL. For foreign functions, abstraction from AutoCorres' monadic C embedding to a purely functional  embedding is manually verified through intermediate Cogent embeddings.
AutoCorres  comes with a number of examples including binary search and QuickSort on 32-bit word arrays which use the \waget and \waput operations. Their proofs for these operations (roughly 250 lines) are about half the size of ours but ours are reusable for arrays with elements of any unboxed type.

If we were only concerned with proving functional correctness of an
array implementation, we could have taken a top down
approach, \ie generate an implementation from the specification, using
existing tools or frameworks~\cite{Lammich_ITP_2019, Lammich2019, Lammich_ITP_2010},  rather than prove that an
implementation refines its specification.
While a top down approach would likely lessen the
verification burden, our approach allows for more
control over the implementation and is suitable for
cases where an implementation already exists, as is the case for \bilby.

As mentioned in the introduction, \cogent purposely lacks recursion and iteration in order to ensure totality. There is an ongoing effort to add a limited form of recursion to \cogent~\cite{Murray:2019} while retaining totality through a termination checker~\cite{Murray:2019,Qiu:2020}. Extending the proof infrastructure of \cogent to support recursive types and to certify termination for functions that pass the termination checker is non-trivial and still remains open. 
Additionally, we are exploring adding arrays as built in types in \cogent and extending the compiler certificate to account for a number of array operations. Even then, we believe that the language interoperability approach presented in this paper remains valuable. It serves as a reference for how \cogent users and system developers can contribute their own additional data structures with their preferred implementation to \cogent rather than relying solely on \cogent developers to extend the language with each desired additional data structure and data structure operation. Moreover, operations that are directly implemented in C and called through the FFI may occasionally be implemented more efficiently than if they were directly implemented in \cogent. This is because users can escape the \cogent type system when implementing operations in C and for instance use internal aliasing within a function's implementation as long as they can prove that the overall C implementation respects the frame invariant. 


\section{Conclusion}

We have demonstrated our cross-language approach to proving software correct.
Our systems mix \cogent, a safe functional language with a compiler that proves most of the required theorems automatically, and C, an unsafe imperative language with few guarantees. 
Specifically, we verified the array implementation and general loop iterators provided in \cogent's ADT library, which were used the implementation of real-world file systems, 
and we showed that they maintain the invariants required by \Cogent. This enabled us to eliminate some key assumptions in the pre-existing verification of the \bilby file system in Cogent.
These case-studies demonstrate that manual C verification can be straightforwardly composed with \cogent's  refinement chain, leading to a top-level shallow embedding that can be seamlessly connected with functional correctness specifications to ensure the correctness of an overall \cogent-C system.

\begin{acks}
This research is partially supported by an Australian Government Research Training Program (RTP) Scholarship.
\end{acks}



\bibliographystyle{plainnat}
\bibliography{references}

\end{document}